\newtheorem{definition}{Definition}
\newtheorem{theorem}{Theorem}
\newtheorem{proposition}{Proposition}
\newtheorem{example}{Example}
\newcommand{\se}{\!=\!}
\newcommand{\ignore}[1]{}
\begin{document}


\title{Making massive probabilistic databases practical}

\author{Andrei~Todor,
        Alin~Dobra,
        Tamer~Kahveci, Christopher Dudley
\IEEEcompsocitemizethanks{\IEEEcompsocthanksitem A. Todor, A. Dobra, T. Kahveci and Christopher Dudley are with the Department
of Computer and Information Sciences and Engineering, University of Florida, FL, 32611-6120.\protect\\
E-mail: \{atodor,adobra,tamer\}@cise.ufl.edu }
\thanks{}}

\maketitle

\begin{abstract}
  Existence of incomplete and imprecise data has moved the database
  paradigm from deterministic to probababilistic information.
  Probabilistic databases contain tuples that may or may not exist
  with some probability.  As a result, the number of possible
  deterministic database instances that can be observed from a
  probabilistic database grows exponentially with the number of
  probabilistic tuples. In this paper, we consider the problem of
  answering both aggregate and non-aggregate queries on massive
  probabilistic databases.  We adopt the tuple independence model, in
  which each tuple is assigned a probability value.  We develop a
  method that exploits Probability Generating Functions (PGF) to
  answer such queries efficiently.  Our method maintains a polynomial
  for each tuple. It incrementally builds a master polynomial that
  expresses the distribution of the possible result values precisely.
  We also develop an approximation method that finds the distribution
  of the result value with negligible errors. Our
  experiments suggest that our methods are orders of magnitude faster
  than the most recent systems that answer such queries, including
  MayBMS and SPROUT\@.  In our experiments, we were able to scale up to
  several terabytes of data on TPC-H queries, while existing methods
  could only run for a few gigabytes of data on the same queries.
\end{abstract}

\section{Introduction}
Today many applications have to deal with a flood of data acquired by means of new sensing technologies. These techologies gather data at enormous rates, but are not always accurate. Organizing, querying or mining such uncertain data is imperative in modern database systems. 

The reason for an urgent need for efficient querying mechanisms for uncertain databases is not limited to their ever-growing sizes. Traditional databases are also often confronted with tasks that require a probabilistic treatment of their records. One example scenario is cross-cutting queries involving both deterministic and probabilistic data. For example, consider a query which has to join a large deterministic table and a small probabilistic table. The resulting table is probabilistic, and this characteristic has to be propagated through the subsequent stages of the computation. Another area where probabilistic approaches benefit deterministic data are fuzzy queries~\cite{testemale,zemankova}. When confronted with a decision, it is seldom the case that we have a hard criterion. Therefore, for a selection predicate, it is desirable to allow a soft margin for the tuples to pass the selection. This can be achieved by returning not just the tuple itself, but also a measure of how close the tuple is to the margin. This measure can be expressed as a probability, given by a function similar to a sigmoid around the threshold value. Querying the resulting data requires a probabilistic approach. Another scenario where a probabilistic assessment is needed is apparent when ranking query results in deterministic databases~\cite{agrawal,chaudhuri}. Large databases often pose a challenge to the user by returning an overwhelming number of tuples. This happens particularly when the selection criteria are too indiscriminate. To rank the results in a large answer set we need a probabilistic measure of the relevance of each returned tuple. On the contrary, when the criteria are too restrictive, the query may return no results. In this case, we can use a fuzzy query approach to return the tuples that are closest to the desired result. All these current challenges to modern database systems prompt us to look for efficient solutions for queries in large probabilistic databases.

To distinguish traditional terminology from the probabilistic approach of this paper, we call an uncertain tuple a \emph{probabilistic tuple} or \emph{probabilistic data}. Each probabilistic tuple in a relation may or may not correspond to factual data in reality with some probability. We call the database systems that manage probabilistic data \emph{ probabilistic databases}.

In the simple tuple independent model~\cite{suciu}, which we adopt in this paper, the probability of the tuple is an attribute of that tuple and is independent of all other tuples. A table with $n$ probabilistic tuples leads to $2^n$ possible deterministic tables, as each tuple may or may not appear in a deterministic table. Figure~\ref{ex} presents a small table with only three probabilistic tuples and one attribute. The second column is the tuple probability. Figure~\ref{ex2} presents the eight deterministic tables that can be instances of the probabilistic table. The deterministic instances have a probability of occurence depending on which tuples are present, and are mutually exclusive. The enumeration of all possible deterministic instantiations is known as the possible worlds model~\cite{HavingRe}, and is of purely theoretical importance. In fact, all related work on querying probabilistic databases aims to avoid enumerating the possible determinstic databases resulting from a probabilistic database. Nevertheless, the exponential growth in the possible worlds semantic is a way of evidentiating why querying probabilistic databases is very difficult. 

\begin{figure}
\caption{A probabilistic table}
\centering
\begin{tabular}{c|c|c}
\hline
\multicolumn{1}{|c|}{R} & v & \multicolumn{1}{c|}{p} \\
\hline
  & 3 & \multicolumn{1}{c|}{0.7} \\
  & 8 & \multicolumn{1}{c|}{0.8} \\
  & 5 & \multicolumn{1}{c|}{0.5} \\ \cline{2-3}
\end{tabular}
\label{ex}
\end{figure}

\begin{figure}
\caption{Deterministic tables that can be instances of the probabilistic table in Figure~\ref{ex}}
\centering
\subfloat[p=0.03]{
\begin{tabular}{c|c}
\hline
\multicolumn{1}{|c|}{R} & \multicolumn{1}{c|}{v} \\
\hline
\end{tabular}}
\qquad 
\subfloat[p=0.07]{
\begin{tabular}{c|c}
\hline
\multicolumn{1}{|c|}{R} & \multicolumn{1}{c|}{v} \\
\hline
  & \multicolumn{1}{c|}{3} \\\cline{2-2}
\end{tabular}}
\qquad
\subfloat[p=0.12]{
\begin{tabular}{c|c|c}
\hline
\multicolumn{1}{|c|}{R} & \multicolumn{1}{c|}{v} \\
\hline
  & \multicolumn{1}{c|}{8} \\\cline{2-2}
\end{tabular}}
\qquad
\subfloat[p=0.03]{
\begin{tabular}{c|c}
\hline
\multicolumn{1}{|c|}{R} & \multicolumn{1}{c|}{v} \\
\hline
  & \multicolumn{1}{c|}{5} \\\cline{2-2}
\end{tabular}}
\qquad
\subfloat[p=0.28]{
\begin{tabular}{c|c}
\hline
\multicolumn{1}{|c|}{R} & \multicolumn{1}{c|}{v} \\
\hline
  & \multicolumn{1}{c|}{3} \\
  & \multicolumn{1}{c|}{8} \\\cline{2-2}
\end{tabular}}
\qquad 
\subfloat[p=0.07]{
\begin{tabular}{c|c}
\hline
\multicolumn{1}{|c|}{R} & \multicolumn{1}{c|}{v} \\
\hline
  & \multicolumn{1}{c|}{3} \\
  & \multicolumn{1}{c|}{5} \\ \cline{2-2}
\end{tabular}}
\qquad
\subfloat[p=0.12]{
\begin{tabular}{c|c}
\hline
\multicolumn{1}{|c|}{R} & \multicolumn{1}{c|}{v} \\
\hline
  & \multicolumn{1}{c|}{8} \\
  & \multicolumn{1}{c|}{5} \\ \cline{2-2}
\end{tabular}}
\qquad
\subfloat[p=0.28]{
\begin{tabular}{c|c}
\hline
\multicolumn{1}{|c|}{R} & \multicolumn{1}{c|}{v} \\
\hline
  & \multicolumn{1}{c|}{3} \\
  & \multicolumn{1}{c|}{8} \\
  & \multicolumn{1}{c|}{5} \\ \cline{2-2}
\end{tabular}}
\qquad
\label{ex2}
\end{figure}

The rest of the paper is organized as follows. In section ~\ref{sec:related} we briefly survey the existing work in the area. In Section~\ref{sec:contributions} we provide an overview of our contributions. In Section~\ref{sec:math} we introduce the theoretical foundations for our method. In Section~\ref{sec:queries} we describe our process of transforming probabilistic queries into deterministic query plans. In Section~\ref{sec:imp} we present the implementation details for our method. In Section~\ref{sec:exp} we report our experimental results, and the last section is dedicated to conclusions.

\section {Related Work}
\label{sec:related}
From the seminal work of Imielinski and Lipski~\cite{imielinski} on incomplete information stemmed a new area of research trying to provide solutions to the tasks related to probabilistic data. These solutions propose a series of models such as v-tables, c-tables~\cite{imielinski}, pc-tables~\cite{green2006}, pvc-tables~\cite{fink}, tuple independent tables~\cite{suciu}, BID tables~\cite{barbara}, U tables~\cite{utables} and so on, summarized by Suciu et al. in~\cite{suciu}. A common characteristic of these methods is the presence of an annotation column containing for each tuple information pertaining to the presence or provenance of the tuple in the relation, either in the form of a probability value, boolean variable, logical formula, provenance information or multiplicity. Green et al. ~\cite{green} have noted the similarity of these annotations and indicated that they are all described in a uniform approach by the semiring algebra. Dalvi et al.~\cite{dalvi,suciu} presented an important finding on the algorithmic complexity of queries in probabilistic databases, showing that they fall into two classes: some that have polynomial data complexity and others that are \#P-complete. The authors also provided a polynomial time algorithm for deciding the class of any query.

Aggregate queries constitute one of the most difficult classes of queries on probabilistic databases. Such queries on conditional tables have been considered by Lechtenborger et al. in~\cite{lechtenborger}. Amsterdamer et al.~\cite{amsterdamer} provided a theoretical framework for aggregate queries on annotated databases. Based on this work, Fink et al.~\cite{fink} developed a probabilistic database system with aggregate queries. 
Currently, the most complete probabilistic database management systems in terms of queries that they can answer are MayBMS\cite{maybms} and SPROUT~\cite{fink}. Between these, MayBMS does not provide full computation of distributions for aggregate queries, but is limited to the expected value and confidence intervals for the result. The recent work of Fink et al. with SPROUT, on the other hand, supports aggregate queries. The relational model of SPROUT, pvc-tables, combined with the query language expressed as semiring and semimodule expressions, provide a complete and uniform representation for probabilistic databases and queries. The implementation, based on decomposition trees, follows closely the theory of the semiring algebra. However, this uniform treatment of queries results in an implementation that in our opinion would be otherwise more efficient. In particular, any aggregate query results in a decomposition tree of size at least proportional to the input table, which is enormous in real applications. More importantly, they need to be kept in memory and processed. This dramatically limits the applicability of SPROUT to large scale datasets.

The extension of database management systems into the probabilistic realm is an important advancement, but the price that has to be paid for it is still too high using existing systems. The computational requirements for these systems far outweigh the deterministic counterpart. 

\section{Contributions of this paper}
\label{sec:contributions}
\ignore{\bf{Contributions of this paper}}

In this work we advance our understanding of probabilistic databases by providing a set of solutions that significantly improve the computation time for query processing reported for existing solutions. More specifically, we consider the problem of fast query evaluation in probabilistic databases. We regard a probabilistic database as a collection of probabilistic relations, for which we adopt the tuple independence model~\cite{suciu}. The result of an aggregate query in such a database is a probability distribution over the possible values of the result. 
Our performace improvement is dramatic particularly for the hardest queries, those containing aggregates. We show that it is possible to obtain practical response times to queries on probabilistic databases of up to terabyte sizes, while existing state of the art do not scale beyond gigabytes.

We maintain that the central element for an efficient implementation of a probabilistic database system should be a general, flexible data type that can be manipulated by all the standard query operators to produce the final result. The data encapsulated by this data type are in general the tuple probabilities and the aggregated value. The operators take each tuple in turn and progress through the table accumulating the final result. As we will show in the paper, our abstract data type represents a probability distribution, and its implementations allow us to express queries as operators applied to it. Moreover, different implementations of this abstract data type allow exact computation, as well as approximations\ignore{, a seamless integration of the two and transition between them.} 

We present a versatile representation for the probability distributions obtained as results of aggregate queries on probabilistic databases, based on polynomials, called probability generating function. Furthermore, we show how this representation is adapted for the key traditional aggregates, namely COUNT, SUM, MIN and MAX. For each aggregate, we develop the method that computes the exact entire distribution of the result. Given the possibly massive size of resulting distribution, we also describe some efficient methods to approximate the distributions. To integrate our mathematical apparatus for probabilistic aggregates into a database system, we also provide a mechanism for mapping probabilistic queries into deterministic query plans that can be carried out by existing database systems. To allow the mapping from probabilistic to deterministic execution
plans, we have to ensure that we can carry out the probabilistic computations required
in addition to the work that a deterministic relational database would do. To this end, we incorporate two key ingredients: an abstract data type (ADT), called PGF ADT, that encodes the exact or
approximate PGFs, and user defined aggregates (UDAs) to perform the required
operations.

We use Glade~\cite{glade} as a large scale, high performance DBMS. By doing this, we can make use of the database engine to execute the query rather than
produce a specialized execution unit. Our approach, avoiding cumbersome tree construction, maintenance and traversal, also allows us to design highly efficient operators tailored to particular situations and optimization and parallelization opportunities.

Our experiments demonstrate that with our approach we can answer TPCH queries on 1 TB databases in just one to two minutes. 

{\bf{In short, our contributions are:}}
\begin{itemize}
\item {A new representation and computational model for probabilistic databases based on probability generating functions}
\item {Approximations for aggregates that achieve a $10^-7$ precision}
\item {Exact computations for aggregates based on FFTW, }
\item {Architecture and parallelism driven implementation which allow processing billions of probabilistic tuples in practical time}
\end{itemize}

\section{Mathematical model} 
\label{sec:math}

We follow the approach in~\cite{fink} and define probabilistic tables in terms of monoid expressions. However, our goal is to directly represent probability distributions over tables as monoid expressions that can be directly evaluated by the database engine, thereby achieving uniformity of computation, leading to significantly improved computation time.

\subsection{Overview}
Suppose that we want to calculate the COUNT aggregate for the probabilistic table in Figure~\ref{ex}. We can do this by considering each tuple to be a Bernoulli random variable with probability of success given in the $p$ column. The aggregate is then given by the sum of all the Bernoulli random variables. The resulting random variable is said to have a Poisson Binomial distribution~\cite{cramer1946}. We can compute this distribution easily using probability generating functions ~\cite{goldberg}. \ignore{The probability generating function of a random variable which takes values $v_1, \dots, v_n$ with corresponding probabilities $p_1, \dots, p_n$ is the function $F(X)=p_1X^{p_1}+ \dots +p_nX^{v_n}$. An important theorem allows us to compute the PGF of a sum of independent random variables as the product of the PGFs of the components.} For our example, first we find the PGFs of the variables associated with the individual tuples. These are $F_1(X)=0.3+0.7X, F_2(X)=0.2+0.8X, F_3(X)=0.5+0.5X$. The PGF of the sum of the random variables is the product of these PGFs. Therefore the PGF giving the COUNT aggregate is given by \ignore{obtained by multiplying the previous polynomials:}$F(X)=F_1(X)F_2(X)F_3(X)=0.28 X^3+0.47 X^2+0.22 X+0.03$\ignore{, which is nothing but a simple representation of the desired distribution.} 

The SUM distribution can be computed very similarly. The Bernoulli random variables associated with each tuple are now scaled with the value in the tuple. The resulting tuple PGFs are $F_1(X)=0.3+0.7X^3, F_2(X)=0.2+0.8X^8, F_3(X)=0.5+0.5X^5$, and the sum aggregate is given by $0.28 X^{16}+0.12 X^{13}+0.28 X^{11}+0.19 X^8+0.03 X^5+0.07 X^3+0.03$. Note that, while the result of the COUNT aggregate is a polynomial, the result of the SUM aggregate in general is not, because the exponents are in general real numbers. Nevertheless, the form of the PGF resembles a polynomial, and is in fact called a generalized exponents polynomial.

Now consider what happens in the case of a MIN aggregate. The tuple random variables are the same as for the SUM aggregate. By analogy with the SUM, we would like to be able to perform a binary operation on polynomials that would give us the correct PGF as result. Noting that the values are represented in the exponents of the polynomials, we can see that this operation is multiplication, but the exponent addition becomes the MIN operation. More concretely, consider the first two tuples, having PGFs $F_1(X)=0.3+0.7X^3, F_2(X)=0.2+0.8X^8$. When both tuples are absent, with probability 0.06, the result is undefined, which we represent with the neutral element of the MIN operation, $\infty$. When the first tuple is absent and the second is present, with probability 0.24, the result is 8, which is $\min(\infty,8)$. Therefore, the correspondin term in the PGF is $0.24X^{\min (\infty,8)}=0.24X^8$. In the same way we consider all pairwise combinations of terms in the two tuple polynomials, obtaining the PGF for the two tuples: $0.06 X^\infty + 0.24X^8 +0.14X^3 + 0.56X^3 = 0.06 X^\infty + 0.24X^8 +0.7X^3$. The operation that we have to perform on PGFs is nothing but a special form of polynomial multiplication, determined by the aggregate under consideration, in this case MIN.

In classical deterministic databases, the aggregates are unified by the general theory of algebraic structures as monoids. Inspired by this observation, we ask ourselves if a similar unified treatment is possible with probabilistic databases. In the deterministic databases, the monoid elements that are aggregated are real numbers. We have seen that they translate into generalized exponents polynomials in probabilistic databases, and that the monoid operations translate into different kinds of polynomial multiplication. We formalize these observations into a precise mathematical formulation. To make our approach as general as possible, we have to allow results of aggregate queries as tuple values, that can participate in other queries. This leads to our probabilistic table model, which we also describe in this section.
 
\subsection{Generalized exponents polynomials}

We start by introducing a generalization of polynomials that allows a uniform treatment of various types of aggregation operations in relational algebra. To do so, a rigorous definition of polynomials as monoid rings is necessary.
\begin{definition}[Monoid ring~\cite{Lang}]
\label{def:monoidring}
Let $R$ be a ring and $G$ be a monoid. Consider all the functions $\phi:G\rightarrow R$ such that the set $\{g:\phi(g) \ne 0\}$ is finite. Let all such functions be element-wise addable. We define multiplication by $(\phi \psi)(g) = \sum_{k+l=g}\phi(k)\psi(l)$. The set of all such functions $\phi$, together with these two operations, forms a ring, the \emph{monoid ring} of $G$ over $R$, denoted $R[G]$.
\end{definition}

\begin{example}[The polynomial ring]
If $R$ is the ring $(\mathbb R,+,\cdot )$ and $G$ is the monoid $(\mathbb N,+)$, then $\mathbb R [\mathbb N]$ is the monoid ring of polynomials over $\mathbb R$, also denoted by $\mathbb R [X]$. A polynomial is, in terms of Definition~\ref{def:monoidring}, the function that associates the exponents to the corresponding coefficients. The multiplication rule in Definition~\ref{def:monoidring} is the usual polynomial multiplication.
\end{example}

\begin{example}[Generalized exponents polynomials]
If instead of natural numbers we choose the set of real numbers for the monoid $G$, we obtain a monoid ring similar to the polynomial ring, but the ``polynomials'' have real exponents. The definition requires that the number of terms in the polynomial is finite, because each member of the monoid ring is a function that has to be non-zero for only a finite number of its arguments. 
We can further generalize polynomials by considering $\mathbb R$ to be any monoid over the set of real numbers. For example, addition could be defined as the minimum of the two numbers. 
We call the elements of the monoid ring defined by a monoid $\mathbb R$ over the real  numbers \emph{generalized exponents polynomials}. 
\end{example}

In the rest of the paper, for brevity, we use the term \emph{polynomial} when referring to \emph{generalized exponents polynomials}. 

\subsection{Polynomial monoids}
Next, we introduce a monoid defined as set of polynomials whose coefficients add to one, together with the polynomial multiplication operation. The monoid operations translates in practice into aggregation operations performed on the probability distribution of the aggregate.\ignore{ We provide the proofs for some of the results that follow in the Appendix.}

\begin{proposition}[Polynomial monoid]
\label{thm:polymonoid}
Let $\mathbb R_*$ be a monoid over the set of real numbers and ``$\cdot$'' the multiplication operation in the polynomial monoid ring. The set of polynomials in the ring over $\mathbb R$, for which the sum of the coefficients is one, together with the operation ``$\cdot$'', forms a monoid, denoted by $\mathbb R [\mathbb R_*]$, called the \emph{polynomial monoid}.
\end{proposition}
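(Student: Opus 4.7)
The plan is to verify the three monoid axioms, namely closure under the multiplication operation, associativity, and existence of an identity element, while taking the monoid-ring structure from Definition~\ref{def:monoidring} for granted. Since $\mathbb{R}[\mathbb{R}_*]$ is already a ring, associativity of ``$\cdot$'' comes for free, so the real work is to show that the subset of ``coefficient-sum-one'' polynomials is closed under multiplication and contains a multiplicative identity.

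For closure, I would take two polynomials $\phi$ and $\psi$ viewed as finitely-supported functions from $\mathbb{R}_*$ to $\mathbb{R}$ with $\sum_{g} \phi(g) = 1$ and $\sum_{h} \psi(h) = 1$. By the convolution formula in Definition~\ref{def:monoidring},
\begin{equation*}
\sum_{g}(\phi\psi)(g) \;=\; \sum_{g}\sum_{k+l=g}\phi(k)\psi(l) \;=\; \sum_{k}\sum_{l}\phi(k)\psi(l) \;=\; \Bigl(\sum_{k}\phi(k)\Bigr)\Bigl(\sum_{l}\psi(l)\Bigr),
\end{equation*}
where the double sum is finite because both supports are finite, and the reindexing is legitimate because every pair $(k,l)$ appears exactly once when $g$ runs over $\mathbb{R}_*$ with the constraint $k+l=g$. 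The right-hand side equals $1\cdot 1 = 1$, so $\phi\psi$ again sums to one. A small subtlety worth mentioning explicitly is that combining like terms in the convolution never changes the total coefficient sum, so the argument is insensitive to how the product is written out.

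For the identity, let $e$ denote the identity of the monoid $\mathbb{R}_*$, and take the polynomial $\mathbf{1}$ defined by $\mathbf{1}(e)=1$ and $\mathbf{1}(g)=0$ for $g\neq e$. Its coefficients obviously sum to one, and the convolution $(\mathbf{1}\cdot\phi)(g) = \sum_{k+l=g}\mathbf{1}(k)\phi(l) = \phi(g)$ confirms that it acts as a two-sided identity for ``$\cdot$''. Associativity is inherited verbatim from the ring $\mathbb{R}[\mathbb{R}_*]$, since we are merely restricting the multiplication to a subset that we have just shown is closed and contains the unit.

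I do not anticipate a true obstacle here; the proof is essentially bookkeeping. The only place one must be careful is in the closure calculation, specifically justifying the interchange of the two sums and making sure the ``sum of coefficients'' functional is genuinely multiplicative on the monoid ring, i.e., that it coincides with evaluating the polynomial at the formal point whose exponentiation sends every $g\in\mathbb{R}_*$ to $1$. Once this observation is made explicit, closure becomes a one-line consequence and the proposition follows.
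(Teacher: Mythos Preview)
Your proposal is correct and follows essentially the same route as the paper: verify closure by showing that the coefficient-sum functional is multiplicative (so $1\cdot 1=1$), then note that associativity is inherited from the ambient ring and that the identity is the polynomial supported at the monoid's neutral element. The only differences are cosmetic---you work in the function notation of Definition~\ref{def:monoidring} and spell out the reindexing and identity verification more explicitly, whereas the paper uses polynomial notation and dispatches associativity and the unit in one sentence.
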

\begin{proof}
First we have to show that the set of polynomials whose coefficients add to one is closed under multiplication. Consider two such polynomials, $A(X)=\sum_{i}p_iX^{a_i}$ and $B(X)=\sum_{j}q_jX^{b_i}$. Their product is $AB(X)=\sum_k\sum_{a_i+b_j=k}p_iq_jX^{k}$. The sum of the coefficients of $AB$ is

$\sum_k\sum_{a_i+b_j=k}p_iq_j=\sum_i\sum_jp_iq_j=\sum_ip_i\sum_jq_j=1$

Clearly, the polynomial product is associative and has a neutral element, the polynomial $1=X^0$, where 0 is the neutral element in the monoid $\mathbb R_*$
\end{proof}

\ignore{
\subsection {Probability monoid}
\label{thm:probmonoid}
\begin{proposition}[Probability monoid]
Let $\varoplus:\mathbb R \rightarrow \mathbb R$ be the operation defined by $p_1\varoplus p_2=1-(1-p_1)(1-p_2)$. The interval $[0,1]$, together with the operation $\varoplus$, forms a monoid, called the ``probability monoid''.
\end{proposition}
}

\subsection{Probability Generating Function}
Now we introduce probability generating functions as members of the polynomial monoid. They are an equivalent representation of probability distributions obtained by aggregation. We also show how aggregates can be computed as products of these functions. 
\begin{definition}[Probability Generating Function]
\label{def:ztrans}
Let $A$ be a discrete random variable taking values $a_1, \dots, a_N$ in a monoid $\mathbb R_*$.
The Probability Generating Function (PGF) of $A$ is defined as the polynomial $Q_A(X)\in \mathbb R[\mathbb R_*]$, defined by
  \begin{equation*}
    Q_A(X) = \sum_{k=1}^N P(A=a_k) X^{a_k}
  \end{equation*}
\end{definition}

\ignore{In what follows we write for short $\sum_{a}$ to denote summation over all possible values of random variable $A$.}
The following important theorem relates the computation of the PGF of a sum of random variables in any monoid to the product of the PGF of the elements. This is a generalization of Theorem 7.8 in ~\cite{goldberg} in two ways: it allows random variables to have real values, not just natural numbers, and allows the sum to be carried out in any monoid, not just normal addition.

\begin{theorem}[PGF of a sum of random variables]
\label{thm:pgf_sum}
Let $A_1,\dots, A_n$ be independent discrete random
variables taking values in a monoid $\mathbb R_*$, with corresponding PGFs $Q_{1}(X),\dots, Q_{n}(X)$, respectively. With the addition defined in $\mathbb R_*$, the PGF of $A=\sum_{i=1}^nA_i$ is 

$$Q_A(X)=\prod_{i=1}^n Q_i(X)$$
\end{theorem}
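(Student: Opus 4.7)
The plan is to proceed by induction on $n$, reducing the theorem to the two-variable case; associativity of the monoid operation in $\mathbb{R}_*$ (and of ring multiplication in $\mathbb{R}[\mathbb{R}_*]$) then propagates the identity to arbitrary $n$. The base case $n=1$ is immediate from the definition of the PGF. For the inductive step, assuming the claim for $n-1$ variables, it suffices to establish that if $B = A_1 + \cdots + A_{n-1}$ has PGF $Q_B = Q_1 \cdots Q_{n-1}$ and $A_n$ is independent of $B$ with PGF $Q_n$, then $Q_{B+A_n} = Q_B \cdot Q_n$. Thus the core of the argument is the two-variable statement.

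For two independent random variables $A$ and $B$ taking values in $\mathbb{R}_*$, I would write $Q_A(X) = \sum_i P(A = a_i)\,X^{a_i}$ and $Q_B(X) = \sum_j P(B = b_j)\,X^{b_j}$, with both sums finite by the monoid-ring definition. Applying the multiplication rule of Definition~\ref{def:monoidring}, the coefficient of $X^g$ in $Q_A \cdot Q_B$ is
\[
(Q_A Q_B)(g) \;=\; \sum_{a_i + b_j = g} P(A = a_i)\, P(B = b_j),
\]
where the indexing sum $a_i + b_j$ is taken in $\mathbb{R}_*$. By independence, each summand equals $P(A = a_i,\, B = b_j)$, and summing over the event $\{A + B = g\}$ collapses to $P(A + B = g)$. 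Hence the coefficient of $X^g$ in $Q_A \cdot Q_B$ matches the coefficient of $X^g$ in the PGF of $A + B$, and since $g$ is arbitrary the two polynomials coincide, closing the induction.

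The main subtlety, more than a genuine obstacle, is aligning the formal exponent addition inside the polynomial product with the semantic addition of random-variable values in $\mathbb{R}_*$. This is resolved by construction: the multiplication $(\phi\psi)(g) = \sum_{k+l=g} \phi(k)\psi(l)$ uses the very monoid operation of $\mathbb{R}_*$ that governs $A_1 + \cdots + A_n$, so the two coincide by design; this is exactly what makes the generalized polynomials the right carrier for arbitrary aggregates such as COUNT, SUM, MIN, and MAX. A secondary verification is that the product remains in the polynomial monoid of Proposition~\ref{thm:polymonoid}: finiteness of support is preserved because only finitely many pairs $(i,j)$ contribute, and the coefficients of $Q_A Q_B$ sum to $(\sum_i p_i)(\sum_j q_j) = 1$, as was shown in the proof of that proposition. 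No commutativity of $\mathbb{R}_*$ is required, provided the convention on the order of summation in $A_1 + \cdots + A_n$ matches the order of the factors in $Q_1 \cdots Q_n$; for the aggregates of interest here the monoid is in any case commutative.
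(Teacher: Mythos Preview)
Your proof is correct and follows essentially the same approach as the paper: induction on $n$ reducing to the two-variable case, which is established by matching coefficients via the monoid-ring multiplication rule and independence. The paper takes $n=2$ as its base case and phrases the coefficient identity through conditioning on $A_2$ rather than directly invoking $P(A=a_i)P(B=b_j)=P(A=a_i,B=b_j)$, but these are cosmetic differences; the one substantive point the paper makes explicit that you leave implicit is that independence of $A_n$ from each $A_1,\dots,A_{n-1}$ implies independence of $A_n$ from their sum $B$.
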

\begin{proof}
We will prove the theorem by induction. The base case is for $n=2$. Let $S=A_1+A_2$. 
Then, on one hand, by the definition of the polynomial product, we have 

$Q_1(X)Q_2(X)=\sum_{t}\sum_{a_1+a_2=t} P(A_1 \se a_1)P(A_2 \se a_2)X^t$  

On the other hand, from the definition of the PGF 

$Q_S(X)=\sum_sP(S \se s)X^s$

We have to show that for all $s$, 

$P(S \se s)=\sum_{a_1+a_2=s} P(A_1 \se a_1)P(A_2\se a_2)$. 

To do this, we write $P(S \se s)$ by conditioning on $X_2$:

$P(S \se s)=\sum_{a_2}P(S \se s|A_2 \se a_2)P(A_2 \se a_2)=\sum_{a_2}P(A_1+a_2 \se s)P(A_2 \se a_2)=\sum_{a_1+a_2=s}P(A_1 \se a_1)P(A_2 \se a_2)$

Hence, the two polynomials, $Q_S$ and $Q_1Q_2$ are equal and the base case is proved.

For the inductive case, we consider a new random variable $X_{n+1}$ independent of $X_1,\dots,X_n$. We assume that $Q(z)=\prod_{i=1}^n Q_i(z)$ and we show that the PGF of $X+X_{n+1}$ is $Q(z)Q_{n+1}(z)$. This is reduced to the base case, where the two variables are now $X$ and $X_{n+1}$. By the hypothesis, $X_{n+1}$ is independent of each $X_i$, for $i \in {1,\dots,n}$. It follows that $X_{n+1}$ is independent of $X$ (see~\cite{goldberg}, Theorem 7.8).

\end{proof}

\subsection{Probabilistic Databases}
We are now ready to assemble the theoretical concepts we have introduced into a cohesive representation for probabilistic databases, and then to define our operators acting on this representation.
\begin {definition}[Probabilistic table]
\label{def:table}
A probabilistic table $\mathcal R$ is a relation with a probability column $p\in [0,1]$ \ignore{holding values in the probability monoid $([0,1],\varoplus)$,} and where the tuple values are elements of the polynomial monoid $\mathbb R[\mathbb R_*]$. 
A probabilistic database is a set of probabilistic tables.
\end{definition}

The probability column $p$ and the distributions represented by the tuple values induce a probability distribution over relational algebra tables. Each tuple $\tau\in \mathcal R$ is present in a deterministic instance of $\mathcal R$ with probability $\tau.p$, independent of all other tuples. Once the presence of a tuple is determined, the values of the tuple in the deterministic instances are decided independently by a random process according to their distributions, represented as PGFs. 

\begin{example}
Figure~\ref{ex:pgftable} presents a simple probabilistic table. We detail the distribution over deterministic tables it describes in Figure~\ref{ex:pgftable2}.

\begin{figure}[h!]
\caption{A probabilistic table}
\centering
\begin{tabular}{c|c|c|c}
\hline
\multicolumn{1}{|c|}{R} & A & B & \multicolumn{1}{c|}{p} \\
\hline
  & $X^4$ & $0.2X^5+0.3X^7+0.5X^9$ & \multicolumn{1}{c|}{0.9} \\
  & $X^3$ & $X^2$ & \multicolumn{1}{c|}{0.2}  \\ \cline{2-4}
\end{tabular}
\label{ex:pgftable}
\end{figure}

\begin{figure}[h!]
\caption{Deterministic tables that can be instances of the probabilistic table in Figure~\ref{ex:pgftable}}
\centering
\subfloat[p=0.08]{
\begin{tabular}{c|c|c}
\hline
\multicolumn{1}{|c|}{R} & A & \multicolumn{1}{c|}{B} \\
\hline
\end{tabular}}
\qquad 
\subfloat[p=0.004]{
\begin{tabular}{c|c|c}
\hline
\multicolumn{1}{|c|}{R} & A & \multicolumn{1}{c|}{B} \\
\hline
  & \multicolumn{1}{c|}{4} & \multicolumn{1}{c|}{5} \\\cline{2-3}
\end{tabular}}
\qquad
\subfloat[p=0.06]{
\begin{tabular}{c|c|c}
\hline
\multicolumn{1}{|c|}{R} & A & \multicolumn{1}{c|}{B} \\
\hline
  & \multicolumn{1}{c|}{4} & \multicolumn{1}{c|}{7} \\\cline{2-3}
\end{tabular}}
\qquad
\subfloat[p=0.1]{
\begin{tabular}{c|c|c}
\hline
\multicolumn{1}{|c|}{R} & A & \multicolumn{1}{c|}{B} \\
\hline
  & \multicolumn{1}{c|}{4} & \multicolumn{1}{c|}{9} \\\cline{2-3}
\end{tabular}}
\qquad
\subfloat[p=0.72]{
\begin{tabular}{c|c|c}
\hline
\multicolumn{1}{|c|}{R} & A & \multicolumn{1}{c|}{B} \\
\hline
  & \multicolumn{1}{c|}{3} & \multicolumn{1}{c|}{2} \\ \cline{2-3}
\end{tabular}}
\qquad 
\subfloat[p=0.036]{
\begin{tabular}{c|c|c}
\hline
\multicolumn{1}{|c|}{R} & A & \multicolumn{1}{c|}{B} \\
\hline
  & \multicolumn{1}{c|}{4} & \multicolumn{1}{c|}{5} \\
  & \multicolumn{1}{c|}{3} & \multicolumn{1}{c|}{2} \\ \cline{2-3}
\end{tabular}}
\qquad
\subfloat[p=0.054]{
\begin{tabular}{c|c|c}
\hline
\multicolumn{1}{|c|}{R} & A & \multicolumn{1}{c|}{v} \\
\hline
  & \multicolumn{1}{c|}{4}  & \multicolumn{1}{c|}{7} \\
  & \multicolumn{1}{c|}{3}  & \multicolumn{1}{c|}{2} \\ \cline{2-3}
\end{tabular}}
\qquad
\subfloat[p=0.09]{
\begin{tabular}{c|c|c}
\hline
\multicolumn{1}{|c|}{R} & A & \multicolumn{1}{c|}{v} \\
\hline
  & \multicolumn{1}{c|}{4}  & \multicolumn{1}{c|}{9} \\
  & \multicolumn{1}{c|}{3}  & \multicolumn{1}{c|}{2} \\ \cline{2-3}
\end{tabular}}
\qquad
\label{ex:pgftable2}
\end{figure}

\end{example}

It is easy to see that there is a monoid homomorphism $\gamma$ between a monoid $\mathbb R_*$ and the polynomial monoid $\mathbb R[\mathbb R_*]$ over $\mathbb R_*$, given by $\gamma(a)=X^a$. It follows that any deterministic database can be seen as a probabilistic database through the homomorphism $\gamma$, by transforming every attribute $A$ of the deterministic database into $\gamma(A)$ and adding a probability column filled with 1, or some other available probability values. 

\subsection{Query language}
Here we present a probabilistic relational query language for probabilistic databases. First we define the aggregation operators in a uniform representation as monoid operation. 
\ignore{Let $I:\{true,false\}\rightarrow\{0,1\}$ be the indicator function, $I(true)=1; I(false)=0$.}

\begin{proposition}[Aggregation monoids]
\label{thm:aggmonoid}
 Let $+_{SUM}$, $+_{MIN}$ and $+_{MAX}$ be the functions from $\mathbb R \times \mathbb R$ to $\mathbb R$, defined in infix notation as:
\begin{align}
\ignore{a+_{COUNT}b &= I(a\ne 0) + I (b \ne 0)\\}
a+_{SUM}b &= a + b\\
a+_{MIN}b &= \min(a,b)\\
a+_{MAX}b &= \max(a,b)
\end{align}
Then the algebraic structures $\mathbb R_{SUM}=(\mathbb R,+_{SUM},0)$, $\mathbb R_{MIN}=(\mathbb R,+_{MIN},\infty)$ and $\mathbb R_{MAX}=(\mathbb R,+_{MAX},-\infty)$ are monoids.
\end{proposition}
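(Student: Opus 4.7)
The plan is to verify the three monoid axioms---closure, associativity, and existence of a two-sided identity---for each of the three structures. All three verifications are routine and follow immediately from elementary properties of the operations; the main conceptual point is that for $\mathbb R_{MIN}$ and $\mathbb R_{MAX}$ the identity elements $\infty$ and $-\infty$ are not members of $\mathbb R$, so one must tacitly extend the carrier set to $\mathbb R \cup \{\infty\}$ and $\mathbb R \cup \{-\infty\}$, respectively (which the paper does implicitly by listing these as identity elements).

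First I would dispatch $\mathbb R_{SUM}=(\mathbb R,+_{SUM},0)$. Here $+_{SUM}$ is ordinary real addition, which is closed on $\mathbb R$, associative, and admits $0$ as two-sided identity. This is standard and takes one line.

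Next I would handle $\mathbb R_{MIN}$ and $\mathbb R_{MAX}$ together, since the arguments are symmetric. For closure, $\min(a,b)$ and $\max(a,b)$ of two elements in the (extended) carrier remain in the carrier, since $\min$ and $\max$ merely select one of their two arguments. For associativity, observe that for any $a,b,c$,
\begin{equation*}
\min(\min(a,b),c) \;=\; \min(a,b,c) \;=\; \min(a,\min(b,c)),
\end{equation*}
and analogously $\max(\max(a,b),c)=\max(a,b,c)=\max(a,\max(b,c))$; both chains follow because the three-argument $\min$ (respectively $\max$) is well-defined independently of parenthesization. For the identity axiom, for every $a$ in the carrier we have $\min(a,\infty)=\min(\infty,a)=a$ (since $a \le \infty$) and $\max(a,-\infty)=\max(-\infty,a)=a$ (since $-\infty \le a$), so $\infty$ and $-\infty$ are two-sided identities for $+_{MIN}$ and $+_{MAX}$ respectively.

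There is no real obstacle here; the only subtle point worth flagging in the write-up is the extension of the carrier to include the identity elements for the $\min$ and $\max$ cases. Once that is acknowledged, the verification is mechanical, and the proposition provides exactly the monoid structures needed downstream to instantiate the generalized exponents polynomials $\mathbb R[\mathbb R_*]$ for the SUM, MIN, and MAX aggregates within the framework of Theorem~\ref{thm:pgf_sum}.
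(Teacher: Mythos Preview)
Your verification is correct. The paper itself states this proposition without proof, treating it as an elementary fact about the standard aggregation operations; your write-up supplies exactly the routine check of closure, associativity, and identity that the paper omits, and your flag about needing to adjoin $\infty$ (respectively $-\infty$) to the carrier so that the stated identity element actually belongs to the structure is a valid observation that the paper glosses over.
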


In what follows\ignore{let $R$ be a PGF table generated randomly from the distribution given by $p$ in a probabilistic table $\mathcal R$}, let $P$ denote the probability function associated with a random variable, $F$ a polynomial in $\mathbb R[\mathbb R_*]$ and $\tilde F$ the random variable whose distribution has the PGF $F$. \ignore{Let $\phi_F$ be the polynomial $F$ seen as the function $\phi$ in definition~\ref{def:monoidring}; in other words, $\phi_F(a)$ is the coefficient of $X^a$ in $F(X)$.}Let * denote all the attributes of a relation except $p$. 

\begin{definition}
\label{algebra}
The query language $\mathcal Q$ considered in this paper consists of queries that are built using the relational operators $\sigma,\pi,\times,\Join,\varpi$. In $\pi_A$ and $\varpi_{A;\alpha_1\leftarrow(B_1),\dots,\alpha_k\leftarrow(B_k)}$, the attributes in $A$ are PGF of constants, in other words, they consists of a single term with coefficient 1. The aggregates in $\varpi$ are COUNT, SUM, MIN and MAX.\ignore{those defined by the monoids of Theorem ~\ref{thm:aggmonoid}}. The comparison operators are $=,<,\le,>,\ge$, denoted in general by $\theta$. 
\end{definition}

 The relational operators are described in what follows.

\begin{itemize}
\item {\sc projection}
The projection computes the probability of each output tuple as the probability that there is at least one tuple in the input table that projects to the output tuple. This probability can be easily computed using the independence assumption:

\begin{multline*}
\pi_{A_1,\dots, A_n}(\mathcal R) = \{(\tau.A_1,\dots,\tau.A_n,p) | \tau \in \mathcal R, \\
p=1-\prod_{\substack{\upsilon\in \mathcal R\\ \upsilon.A_1=\tau.A_1\\\dots\\\upsilon.A_n=\tau.A_n}}(1-\upsilon.p\}
\end{multline*}

\item {\sc selection}

For the selection operator we can distinguish three cases: the condition involves only deterministic attributes, the condition involves one deterministic and one probabilistic attribute, or the condition involves two probabilistic attributes. In the first case we
simply carry out the selection. Each tuple of the input relation will
be in the result with the initial probability $p$.
In the
second case, one of the attribute values 
is a distribution, represented as a PGF,
obtained from a previous aggregation operation. The PGF is truncated such that only the values that satisfy the condition are maintained as terms of the polynomials. After the illegal terms are eliminated, the PGF is re-normalized. To obtain the tuple probability, we
multiply the probability that the condition is true with the initial probability of the
tuple. In the third case, the selection operator might introduce dependencies between the two distributions, which are hard to handle. Therefore we restrict the language, imposing the condition that the attributes participating in the selection are not used in subsequent computations; they are immediately projected out after the selection. All cases can be expressed mathematically as follows:

\begin{equation*}
\begin{split}
\sigma_{A_i \theta A_j}(\mathcal R) &= \{(\tau.A_1,\dots,\tau.A_{i-1}, F'(X),\\
& \qquad \tau.A_{i+1},\dots,\tau.A_{j-1},G'(X), \\
& \qquad \tau.A_{j+1},\dots,\tau.A_n,p) | \tau\in \mathcal R(A_1,\dots,A_n), \\
& \quad \tau.A_i = F(X)=\sum_k a_kX^{v_k}, \\
& \quad \tau.A_j=G(X)=\sum_l b_lX^{w_l}, \\
& \quad F'(X) = normalize(\sum_{\substack {k\\ \exists l : a_k \theta b_l}} a_kX^{v_k}), \\
& \quad G'(X) = normalize(\sum_{\substack {l\\ \exists k : a_k \theta b_l}} b_lX^{w_l}), \\
& \quad p=\tau.p \times P(\tilde F \theta \tilde G)\}
\end{split}
\end{equation*}

\item {\sc cartesian product} 

The probability of each tuple in the result is the product of the probabilities of the participating input tuples:

\begin{equation*}
\begin{split}
\mathcal R \times \mathcal S &= \{(\tau.A_1,\dots,\tau.A_m,\tau'.B_1,\dots,\tau'.B_n,p) | \\
& \qquad \tau\in \mathcal R(A_1,\dots,A_m),\tau'\in \mathcal S(B_1,\dots,B_m), \\
& \quad p=\tau.p \times \tau'.p \}
\end{split}
\end{equation*}

\item {\sc join} 

The join is a cartesian product followed by a selection. The restriction imposed on the selection operator means that we do not allow joins in which the condition involves two probabilistic attributes.

\begin{equation*}
\mathcal R \Join_{A \theta B} \mathcal S = \sigma_{A \theta R}(\mathcal R\times \mathcal S)
\end{equation*}

\item {\sc aggregation }

We compute  aggregates following these steps:
\begin{enumerate}
\item transform each tuple value to be aggregated into the proper form required by the aggregate. This means that if the aggregate is COUNT, then each value becomes $X$. If the aggregate is SUM and the previous aggregation on that column was MIN (MAX), then the PGF is transformed such that the term $X^{\infty} (X^{-\infty})$ becomes $1=X^{0}$. If The aggregate is MIN (MAX) and the previous aggregationon the column was MAX (MIN), then $X^{-\infty}$ ($X^{\infty}$) becomes $X^{\infty} $ ($X^{-\infty}$). For an attribute $B=F(X)$, and aggregate AGG, we denote this transformation by $T_{AGG}(F(X))$. So, for example, $T_{COUNT}(F(X))=X, T_{SUM}(X^{\infty}+G(X))= X^0+G(X)$.

\item compute the PGF of each tuple as $(1-p) X^0 + p T(B)$. Here, ``0'' in $X^0$ should be understood as the neutral element of the monoid corresponding to the aggregate ($\infty$ for MIN, $-\infty$ for MAX). 

\item compute the aggregate PGF multiplying the tuple PGFs. The final result is 

\begin{equation*}
\varpi_{\emptyset;\alpha=AGG(B)}(\mathcal R)=\left\{\left(\prod ((1-p) X^0)+pT(B)),1\right)\right\}
\end{equation*}
\end{enumerate}

If multiple columns are aggregated at the same time, we are faced with a more complicated situation. The PGFs we compute as the aggregated value for each column depend on each other. Formulating this dependence mathematically is a nontrivial problem. To tackle this challenge, we eliminate the dependence between columns. More specifically, we 
define the aggregation over multiple columns as different aggregation operations over different columns in separate copies of the initial probabilistic table. We express this mathematically as follows:

$\varpi_{\emptyset;\alpha_1=AGG_1(B_1),\dots,\alpha_n=AGG_n(B_n)}(\mathcal R)=$

$\varpi_{\emptyset;\alpha_1=AGG_1(B_1)}(\mathcal R_1) \times \dots \times$ $\varpi_{\emptyset;\alpha_m=AGG(B_n)}(\mathcal R_n)$

where $\mathcal R_i$, for $i=1,\dots ,n$ are identical copies of $\mathcal R$

\item {\sc grouping with aggregation}

When grouping is performed together with aggregation, the steps are essentially the same as before, but they are performed for each group. Whereas with no aggregation the single tuple in the result is sure to exist $(p=1)$, with gouping the resulting tuples have a probability of existence depending on the probability of existence of the tuples in the input table. Mathematically, the operators can be expressed as follows

$\varpi_{A_1,\dots,A_m;\alpha=AGG(B)}(\mathcal R)=$

$\{(a_1,\dots,a_m,F,p) | $
$((a_1,\dots,a_m) \in \pi_{A_1,\dots,A_m}(\mathcal R)$,

$F(X)=\prod_{A_1=a_1,\dots,A_n=a_n} ((1-p)X^0+pT(B))$

$p=\prod_{A_1=a_1, \dots, A_n=a_n}\mathcal R.p \}$

\vspace{15pt}

$\varpi_{A_1,\dots,A_m;\alpha_1=AGG_1(B_1),\dots,\alpha_n=AGG_n(B_n)}(\mathcal R)=$ 

$\{(a_1,\dots,a_m,F_1,\dots,F_n,p)|$

$(a_1,\dots,a_m,F_1,p)\in \varpi_{A_1,\dots,A_m;\alpha_1=AGG_1(B_1)}(\mathcal R),$

$\dots$

$(a_1,\dots,a_m,F_n,p)\in \varpi_{A_1,\dots,A_m;\alpha_n=AGG_n(B_n)}(\mathcal R)\}$

where $\mathcal R_i$, for $i=1,\dots ,n$ are identical copies of $\mathcal R$

\end{itemize}

The probabilities involving comparison operators are computed as follows:


$P( \tilde A = \tilde B)=\sum_{v\in Dom(\tilde A)\cap Dom(\tilde B)} P( \tilde A = v) P( \tilde B = v)$ 

$P( \tilde A < \tilde B)=$

$\sum_{b\in Dom(\tilde B)} \sum_{\substack{a\in Dom(\tilde A)\\a<b}} P( \tilde A = a) P( \tilde B = b)$ 


\
\begin{proposition}[Closure]
\label{thm:closure}
Probabilistic tables are closed under the query language $\mathcal Q$
\end{proposition}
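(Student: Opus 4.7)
The plan is to prove closure by structural induction on the construction of queries in $\mathcal{Q}$. Since probabilistic tables are built from two ingredients---a probability column with values in $[0,1]$ and attribute values that are elements of the polynomial monoid $\mathbb{R}[\mathbb{R}_*]$ (i.e.\ generalized exponents polynomials whose coefficients sum to one)---for each operator I would verify that (i) the output probability column stays in $[0,1]$, and (ii) every output attribute value is again a PGF whose coefficients sum to one. The base case is trivial: an input probabilistic table satisfies both conditions by Definition~\ref{def:table}.

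For the inductive step I would take the operators one at a time. For projection, the attribute values are inherited unchanged from input tuples, so they remain PGFs; the output probability $1-\prod_\upsilon(1-\upsilon.p)$ is in $[0,1]$ because it is a complement of a product of values in $[0,1]$. For selection, the truncated PGFs $F'$ and $G'$ are explicitly re-normalized so their coefficients sum to one, and the new probability $\tau.p\cdot P(\tilde F \,\theta\, \tilde G)$ is a product of two values in $[0,1]$. For Cartesian product, attribute values are inherited and the new probability is the product $\tau.p\cdot \tau'.p \in [0,1]$; join is then immediate as a composition of product and selection.

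Aggregation is the key case. By induction each input value $B$ is a PGF with coefficients summing to one, and the transformations $T_{AGG}$ in Step~1 only rename exponents (e.g.\ $\infty \leftrightarrow 0$, $\infty \leftrightarrow -\infty$) or collapse to $X$, so $T_{AGG}(B)$ remains a PGF. The tuple PGF $(1-p)X^0+p\,T_{AGG}(B)$ then has coefficients summing to $(1-p)+p\cdot 1 = 1$, i.e.\ it lies in $\mathbb{R}[\mathbb{R}_*]$. Proposition~\ref{thm:polymonoid} tells us this monoid is closed under the product associated with the chosen aggregate, so the aggregated value $\prod_\tau\bigl((1-\tau.p)X^0+\tau.p\,T_{AGG}(B)\bigr)$ is again a PGF; the output probability is $1$. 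Multi-column aggregation is defined as a Cartesian product of single-column aggregations over copies of $\mathcal{R}$, so closure lifts from the single-column case together with the already-established closure of $\times$. Grouping with aggregation is the same argument applied group-by-group, using that $\prod_{A_1=a_1,\dots,A_m=a_m}\mathcal{R}.p$ remains in $[0,1]$ as a product of probabilities.

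The main obstacle I anticipate is the aggregation case, and specifically the need to check that every $T_{AGG}$ preserves the ``coefficients sum to one'' invariant for every legal prior-aggregate history (COUNT after SUM, SUM after MIN, MIN after MAX, etc.); a small case analysis of $T_{AGG}$ per input form is what the proof really hinges on, since once that is in hand Proposition~\ref{thm:polymonoid} carries the product through for free. Everything else---selection's re-normalization, projection's complement formula, and the product of probabilities for $\times$ and grouping---is routine arithmetic in $[0,1]$.
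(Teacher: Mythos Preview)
Your proposal is correct and follows essentially the same structural-induction approach as the paper, which also dispatches selection, projection, and Cartesian product quickly and focuses on the aggregation case. The only minor difference is that for aggregation the paper appeals to the semantic fact that the result is the PGF of a sum of random variables (via Theorem~\ref{thm:pgf_sum}), whereas you argue algebraically by checking that each factor $(1-p)X^0 + p\,T_{AGG}(B)$ has coefficients summing to one and then invoking the closure in Proposition~\ref{thm:polymonoid}; both routes are valid and yours is somewhat more explicit.
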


\begin{proof}
We prove the proposition by structural induction. Let $Q$ be any operator in $Q$. If $Q$ is selection, projection or cartesian product, clearly the result is a probabilistic table. If $Q$ is an aggregation operator, we have to show that the aggregate results are in the monoid $\mathcal R[\mathbb R_*]$; in other words, that the sum of the coefficients of the resulting polynomial is 1. This is so because the resulting polynomial is the PGF of a sum of random variables.  
\end{proof}

\begin{proposition}[Soundness]

Let $\mathcal D$ be a probabilistic database and $D$ a random instance of $\mathcal D$. Let $Q$ be a query in the language $\mathcal Q$, and $Q'$ the corresponding deterministic query. Denote by $Q(\mathcal D)$ the result of query $Q$ on the database $\mathcal D$. The probability distribution of $Q(\mathcal D)$ is equal to the probability distribution of $Q'(D)$.  
\end{proposition}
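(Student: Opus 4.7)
The plan is to proceed by structural induction on the query $Q$ in $\mathcal Q$. The induction hypothesis will say that for every sub-query $Q_1$, the tuple-level PGFs together with the probability column of $Q_1(\mathcal D)$ encode exactly the distribution of $Q'_1(D)$ for $D$ drawn from $\mathcal D$. The base case is a probabilistic table $\mathcal R$ itself: by Definition~\ref{def:table} each tuple is independently present with probability $\tau.p$ and, when present, its attribute values are drawn independently from the PGF distributions, which is precisely the semantics of the random instance $D$. Deterministic tables embed into this framework via the homomorphism $\gamma(a)=X^a$, an immediate check.

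For the inductive step I would dispatch $\sigma,\pi,\times,\Join$ first since they are essentially bookkeeping. For $\times$, independence of $\mathcal R$ and $\mathcal S$ makes the probability of a combined tuple a product, matching the deterministic semantics. For $\pi$, the event that at least one input tuple projects to a given output tuple is a union of independent events, so $1-\prod(1-\upsilon.p)$ is the correct probability. For $\sigma$, the deterministic/deterministic case is immediate; in the mixed case, truncating the PGF to terms satisfying $\theta$ and renormalizing yields the correct conditional distribution, and multiplying $\tau.p$ by $P(\tilde F\,\theta\,\tilde G)$ is Bayes' rule for the joint event ``tuple present and condition satisfied''. The syntactic restriction in Definition~\ref{algebra} that forbids reusing the compared attributes when both sides are probabilistic is exactly what is needed to keep the correlations introduced by the selection from leaking into later operators, so the independence assumed by subsequent inductive steps is preserved. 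Joins reduce to $\times$ followed by $\sigma$.

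The substantive case is aggregation. For a single-column aggregate $\varpi_{\emptyset;\alpha=AGG(B)}(\mathcal R)$ I would associate to each tuple $\tau$ an independent random variable $A_\tau$ in the monoid $\mathbb R_{AGG}$ that equals the neutral element $0$ with probability $1-\tau.p$ and is drawn from the distribution represented by $T_{AGG}(\tau.B)$ with probability $\tau.p$. By construction the polynomial $(1-\tau.p)X^{0}+\tau.p\,T_{AGG}(\tau.B)$ is the PGF of $A_\tau$. The deterministic aggregate $Q'(D)$ is exactly $\sum_\tau A_\tau$ in the aggregation monoid, since absent tuples contribute the identity. Theorem~\ref{thm:pgf_sum} then tells us that the product of the per-tuple PGFs equals the PGF of this sum, which is precisely the polynomial stored in the result. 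For multi-column aggregation, the paper deliberately replaces joint aggregation by a Cartesian product of single-column aggregations over independent copies of $\mathcal R$; the marginals agree by the same argument, and this replacement is a semantic convention rather than something to be proved. Grouping-with-aggregation is the same argument applied independently to each group, using the fact that the groups form a disjoint partition of the input tuples and that the surviving tuple probability is the union-of-independent-events probability from the projection case.

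The principal obstacle is making precise, and carrying through the induction, the invariant that every pair of probabilistic attributes in a tuple and every pair of tuples in the intermediate tables are mutually independent conditional on their existence. This is what licenses the use of Theorem~\ref{thm:pgf_sum} at every aggregation step and what makes the tuple-wise probability products in $\pi$, $\times$, and $\Join$ correct. Once this invariant is explicitly formulated and its preservation by each relational operator is verified, each operator case collapses to the straightforward equalities sketched above.
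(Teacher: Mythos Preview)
Your proposal is correct and follows essentially the same structural-induction argument as the paper: each relational operator is handled case by case, with projection via the complement-of-product formula, Cartesian product and join via independence, selection via the deterministic and conditional cases, and aggregation via Theorem~\ref{thm:pgf_sum} applied to the per-tuple PGFs $(1-p)X^{0}+p\,T(B)$. The only cosmetic difference is that where you identify the cross-tuple and cross-attribute independence invariant as the ``principal obstacle'' to be carried explicitly through the induction, the paper simply invokes the hierarchical-query assumption at each step to justify that independence.
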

\begin{proof}
We prove the proposition by structural induction. \ignore{ First, we note that the probabilistic
tables of the base relations satisfy the result in the theorem.  Any
query expressible in the query language consists of an operator
tree. We consider possible operators at intermediate points in the
operator tree. For the induction step,}We assume that the
arguments of an operator satisfy the
result in the theorem. We will prove that the probabilistic table
computed by each operator is correct as well. \ignore{Moreover, the
probabilistic tables can only encode probabilistic spaces in which the
result tuples are independent. It is sufficient to prove that, for any
opearator, the probability that an arbitrary tuple belongs to the result is
correct. The proofs can thus focus on specific result tuples and the
way they can be produced. }

\paragraph*{Projection:} The classic approach \cite{diamonds} to
    compute the probability that a tuple belogs to the result of
    projection is to first consider the negation of this
    statement. Since the query language only allows hierarchical queries, the
    events corresponding to the existence of the tuples that project
    to the target tuple $t$ are independent. This means that, for
    tuples $t^\prime$ that poject to $t$ (denoted by $t^\prime \perp
    t$),
    \begin{equation*}
        P(t\in \mathcal R) = 1-P(t\not\in \mathcal R) = 1-\prod_{t^\prime, t^\prime\perp t} (1-P(t^\prime\in \mathcal R))
    \end{equation*}
    On the other hand, it is easy to show that \ignore{w.r.t the probability
    monoid,}
    \begin{equation*}
        p_t=\ignore{\sum_{\varoplus, t^\prime\in \mathcal R, t^\prime\perp t} p_{t^\prime} 
        =}1-\prod_{t^\prime,t^\prime\perp t} (1-P(t^\prime\in \mathcal R))
    \end{equation*}

\paragraph*{Selection (deterministic):} Deterministic selection is 
straightforward since tuples that fail the selection predicate are
eliminated and tuples that pass the predicate maintain their
probabilities.

\ignore{
\paragraph*{Selection (probabilistic):} In essence, the way the 
probabilistic table of the selection is formed is by creating a tuple
for each possible combination of the values of attributes $A$ and
$B$. In this manner, the correlation introduced by the selection
predicate $A\theta B$ being true is captured by explicitly encoding
the joint probability distribution of the combination $A,B$. If $A$
and $B$ are carried through further instead of their joint distribution
conditioned on $A\theta B$ condition, any use of $A$ and $B$ would be
incorrect. To obtain the joint distribution, we need to enumerate all
possible values $a$ of $A$ and $b$ of $B$. The probability of the
tuple $(..., a, \dots b, ...)$ is $p=\tau_p \times P(A=a) \times P(B=b)$.
}

\paragraph*{Selection (probabilistic+projection):}  \ignore{If the probabilistic 
attributes are projected out from the result of the selection, great
simplification can be obtained. This is due to the fact that}. No
correlations need to be captured since they cannot be used further due
to the elimination of the attributes. 
Since queries are hierarchical the existence probability of the tuple and the distribution of
the probabilistic aggregates are independent thus the new existence
probability of the resulting tuple is simply the product of the two.
    
\paragraph*{Cartesian product:} Let $(t,u)$ be a probabilistic tuple
in the result of the cross product. Since the query is hierarchical, the events associated with $t$ and $u$ are
independent. This includes any PGFs associated with attributes in
the two tuples. With this,
\begin{equation*}
  P( (t,u)\in \mathcal R) = P(t\in \mathcal R) \cdot P(u\in\mathcal R)
\end{equation*}
This is precisely the formula used to compute the probabilistic
table of the cross product.

\paragraph*{Join:} Follows straightforwardly form the deterministic selection and cross
   product proofs.

\paragraph*{Aggregation:} First, all the tuples that participate 
in the aggregation are required to be independent (query is hierarchical). Let $t$ be such a tuple and $B$ the attribute aggregated. As
we have seen in Section~\ref{sec:math}, the aggregate can be expressed as a
sum in a the monoid corresponding to the aggregate. By
Theorem~\ref{thm:pgf_sum}, the PGF of the sum in a monoid is the
product of PGFs as long as the parts are independent -- hierarchical queries ensure this. To encode the aggregate as a
sum of random variables, we use indicator variables that encode the
existence of the tuple multiplied by the value of the attribute. Using
the translation operator $T$, the PGF of $B$ translated in the monoid
of the aggregate is $T(B)$. The PGF of the indicator variable
multiplied by the value of $B$ is, through a simple application of
definition~\ref{def:ztrans}, $(1-p_t)+p_t T(B)$, thus the overall PGF
is $F(X)=\prod (1-p+pT(B))$

For the case when multiple aggregates are involved, separate
computations need to be performed for each aggregation. Should these
results participate together in further computation, their joint
distribution is obtained by computing the cross product. 

\paragraph*{Group By and aggregation:} The same proof as for the case without grouping applies here for each group. In addition, the probability of each tuple in the result is the probability that there is a tuple in each group of the input table.

\end{proof}
\ignore{
\begin{proposition}[\ignore{Completeness and }Succinctness]
\ignore{1. Any finite probability distribution over relational data-
  base instances can be represented by PGF databases.  2. }Given a
probabilistic database $\mathcal D$ and a fixed $\mathcal Q$ query
$Q$, the query result $Q(\mathcal D)$ can be represented as a
probabilistic table with size polynomial in the size of $\mathcal D$.
\end{proposition}
\begin{proof}
\end{proof}
}

\section{Probabilistic aggregates in practice}
\label{sec:practice}
In the vast majority of practical situations, the input PGF tables to a query will have deterministic values in the tuple attributes, not true distributions. The distributions arise as the result of aggregation operators. For the sake of the closure property, the theory we developed in Section~\ref{sec:math} is comprehensive and encompasses the general case in which queries can be applied to any type of probabilistic tables. For practical purposes, important improvements in running time can be achieved by treating the simple and most common cases separately. \ignore{The theory presented in Section~\ref{sec:theory} is very comprehensive and accomodates a wide spectrum of queries that do not often arise in practice.} Here we present some algorithms that lead to fast response time for the most common aggregate queries in probabilistic databases. These algorithms are based on the assumption that the values stored in the input tables on which the probabilistic aggregates are applied are deterministic. The input tables can be seen as traditional tables in the relational model with a special attribute $p$, representing the probability of the tuple.

\subsection{COUNT}
In what follows, let $S(A_1,... ,A_D)$ be a relation schema with $D$ attributes and $R$ an instance of that schema. 
Let $n=|R|$ and for each $t_i\in R, i\in \{1,...,n\}$, let $p_i$ be the probability of tuple $t_i$ and $q_i=1-p_i$.  

For computing the COUNT aggregate we consider each tuple in the probabilistic database to be a binary event with two possible outcomes: present or absent. Such an event is modeled by a Bernoulli random variable with probability of success $p_i$. Let $E_i$ be the random variable associated with tuple $t_i$. The PGF of $E_i$ is $Q^i_{COUNT}(X)=q_i+p_i X$.  The result of the COUNT operator is given therefore by the sum of all these variables, $C=\sum_{i=1}^n E_i$. The distribution of this summation is known as the Poisson binomial distribution~\cite{poisson1837,cramer1946}.

To compute the distribution of $C$ we apply Theorem~\ref{thm:pgf_sum} and obtain the PGF of $C$, $Q_{COUNT}(X)$:
  \begin{equation}
    Q_{COUNT}(X) = \prod_{i=1}^n ( q_i+p_i X)
  \end{equation}
When this polynomial product is expanded, the coefficients represent the distribution of the COUNT aggregate. More precisely, the coefficient of $X^k$ is the probability of the COUNT having the value $k$.
An essential part of our contribution, as we will show in the following sections, is developing efficient methods for the computation of this polynomial product.

\subsection{MIN,MAX}
\label{sec:min}
The MIN and MAX aggregates can be represented using PGF in a manner similar to the COUNT aggregate. However, as pointed out by previous work~\cite{fink,green} and in Section~\ref{sec:math}, the two monoid operations have to be interpreted in a different monoid. Specifically, the addition operation will now stand for the minimum/maximum of the two operands, and the neutral element will be $+\infty/-\infty$. In what follows we will present out approach for the MIN aggregate, as the MAX aggregate is dual to MIN.

We first present a model that expresses the distribution of MIN exactly. We call this the \emph{full distribution}. We then extend this model to express an approximation to the distribution. \ignore{As we will show later in Section~\ref{sec:exp},}This extension will further improve the performance of our method with negligible loss in accuracy.

\subsubsection{Full distribution}

Similar to the COUNT aggregate, we associate a random variable with each tuple. To compute the MIN aggregate, this random variable has to take as value the neutral element of the minimum operation if the tuple is absent. Otherwise, if the tuple is present, it has to take the value of the aggregated attribute of the tuple.

Suppose that $A_k$ is the aggregated attribute. To simplify notation, for $i\in {1,...n}$ define $a_{i}$ to be the value of the aggregated attribute in tuple $t_i$; in other words, $a_{i}=t_i|_{A_k}$. Denote by $\infty$ a value larger than all elements of the set $\{a_{1},... , a_{n}\}$.

 The PGF of the random varible associated with each tuple is

$Q^i_{MIN}(X)=q_iX^{\infty}+p_iX^{a_{i}}$.

We redefine the polynomial multiplication operation so that the addition at the exponent stands for the minimum operation. We denote the resulting operator with $\times_{MIN}$. Explicitely, if $P_1(X)=a_{0}X^{\infty}+a_{1}X+...a_{n_1}X^{n_1}$ and $P_2(X)=b_{0}X^{\infty}+b_{1}X+...b_{n_2}X^{n_2}$ then

$$P_1(X)\times_{MIN}P_2(X)=\sum_{\substack{\min(i,j)=k \\ k \le \min(n_1,n_2)}}a_ib_jX^{k}$$

The PGF for the MIN aggregate is, again, the PGF of the ``sum'' of the tuple random variables, or the ``product'' of the tuple PGF's, where by ``sum'' and ``product'' we mean the newly defined operations.

\begin{equation*}
\begin{split}
Q_{MIN}(X) &= Q^1_{MIN}(X) \times_{MIN} ... \times_{MIN} Q^n_{MIN}(X) \\ 
&=  (q_1 X^\infty + p_1 X^a_1)\times_{MIN}... \times_{MIN}(q_n X^\infty + p_n X^a_n) \\
&=  \sum_{\alpha_i\in \{a_{1},...,a_{n}, \infty\}} \prod_{a_{j}<\alpha_i} q_j \left(1- \prod_{a_{j}=\alpha_i} q_j\right)X^{\alpha_{i}}.
\end{split}
\end{equation*}
 
We note that $P(\min_{i=1,\dots,n}\{a_{i}\}=M)$ is the coefficient of $X^M$ in $Q_{MIN}(X)$.

\subsubsection{Approximation}
As the value of the distribution of MIN increases, the tuple probabilities being equal, the corresponding probability quickly decreases. This is because all the records with smaller values have to be absent. The probability of this event is the product of the probabilities of their absence. For example, consider a probabilistic relation which contains one copy of each of the integers between 1 and 10000 for some attribute, each tuple having a probability of 0.5. The probability of the MIN function returning the value 50 is $1/2^{50}\approx 10^{-15}$. As the value of the MIN function increases, its probability gets even smaller. We see that, while the full distribution of the MIN aggregate has 10000 entries, most of them are irrelevant. In consequence, we have implemented a variant of the algorithm that limits the computation to the smallest $\kappa$ possible values of the minimum. 

\subsection{SUM}
The SUM aggregate is the most challenging to compute in large probabilistic databases, due to the exponential number of possible result values for a probabilistic table with $n$ tuples. Our general approach using the PGF works in this case as well. However, it is impractical for large databases. In this situation we resort to approximate solutions. Below, we first present the exact theoretical solution for the general case, when the SUM attribute can take any real values. We then describe the exact solution for an interesting special case when the SUM attribute can only take integer values. Finally, we develop two approximation strategies, namely normal and moment-based approximations.

\subsubsection{General case}
For this aggregate, we associate a random variable $S_i$ with each tuple, which takes value 0 if the tuple is not present and value $a_{i}$ if the tuple is present. In other words, $S_i=a_{i}E_i$. The PGF of $S_i$ is  
$$Q^i_{SUM}(X)=q_i+p_iz^{a_{i}}=Q^i_{COUNT}(X^{a_{i}})$$

We obtain the PGF of the SUM aggregate by applying Theorem~\ref{thm:pgf_sum}:

$$Q_{SUM}(X)=\prod_{i=1}^{n}\left(q_i+p_iX^{a_{i}}\right)$$

Assuming that each value may appear several times in the relation, we group the tuples by the value of $a_i$ and compute COUNT aggregates on the groups. We denote the COUNT PGFs of the groups by $Q_{COUNT}^{a_i}$. If there are $d$ distinct terms $\{\alpha_1, \dots, \alpha_d\}$, the PGF of the SUM aggregate can be written as:

$$Q_{SUM}(X)=\prod_{k=1}^d\prod_{a_i=\alpha_k}\left(q_i+p_iX^{\alpha_k}\right)=\prod_{k=1}^dQ_{COUNT}^k(X^{\alpha_k})$$
\subsubsection{Limited case}
The preceding equation holds for the most general case when the values $a_i$ are real numbers. In this case, the computation is exponential in the number of tuples. For practial situations in which large tables are involved, we limit our description to the case when the possible values for the attribute being aggregated is in the set $\{0,1,...,m\}$. This allows algorithmic optimizations for the exact computation using the COUNT aggregate and FFT, as we describe in Section ~\ref{sec:imp}. It is worth noting that rational numbers with a fixed number of decimal places (i.e., sets of numbers of the form \ignore{$x, x+y, x+2y,\dots,x+dy$, where $x$ and $y$ are arbitrary real numbers and $d$ is a positive integer}$a\cdot 10^{-b}$, where $a$ and $b$ are arbitrary natural numbers) can be dealt with by appropriate scaling. This restriction of the probabilistic SUM aggregate for polynomial time computation has been considered in the literature before~\cite{fink}.

\subsubsection{Approximations}
For large tables, even if we limit the aggregated values to nonnegative integers, the full sum distribution will be enormous. This is because the sum of a subset of tuples can take many different values. A user will hardly ever need a distribution table of billions of entries, and the presentation of such a distribution would in itself be a challenge. It is more likely that the queries will be limited to range queries, or the user needs an idea of how the probability mass of the aggregate is distributed. For such typical scenarios, approximations of the distribution that can be computed quickly are more desirable. Next, we present two alternative ways to approximate this distribution.

{\bf{Normal}}

We approximate the true distribution with the normal distribution. 
For tuples with attribute values $v_i$ and probabilities $p_i$ we compute the mean $\mu=\sum_iv_ip_i$ and variance $\sigma^2 = \sum_iv_i^2p_i - \mu^2$ of the probabilistic sum aggregate and we approximate the true distribution by the normal distribution with these parameters. We consider the probability mass at a particular point $s$ to be given by the integration of the pdf around $s$:

$p(S=s)=\int_{s-1/2}^{s+1/2}f(x)dx$, where $f(x)=\mathcal{N}(x;\mu,\sigma)$

{\bf{Moment-based}}
We implemented the moment-based approximation described by Lindsay~\cite{lindsay2000,lindsay89}. This method approximates an unkown distribution with a mixture of $p$ distributions from the same family, in particular gamma distributions. In order to do this, the exact moments of the unknown distribution, up to order $2p$, have to be known. The gamma distribution is given by 
\begin{equation}
f(x;\alpha,\mu)=
\frac{1}{\Gamma\left(\alpha\right)}
\left(\frac{\alpha}{\mu}\right)^\alpha
x^{\alpha-1}
e^{-\frac{\alpha x}{\mu}}  
\end{equation}

The approximating distribution is a mixture of $p$ such gamma distributions, with mixing parameters $\pi_j$, a common dispersion parameter $\lambda=1/\alpha$ and mean parameters $\mu_j$:  
\begin{equation}
g(x;\lambda, \mathbf \mu) = \sum_{j=1}^p \pi_j f(x;\frac{1}{\lambda},\mu_j)
\end{equation}

The moments method consists in finding the unknown parameters $\pi_j$, $\lambda$ and $\mu_j$ such that the first $2p$ moments of $g$ match exactly the true moments of the unknown distribution. Let us denote the moments of the true distribution by $m_1, m_2, \dots, m_{2p}$. 
We define
\begin{equation}
\delta^*_r(\lambda)=\frac{m_r}{(1+\lambda)(1+2\lambda)\dots (1+(r-1)\lambda)}
\end{equation}

and the pseudo-moment matrix 

\begin{equation}
\Delta_p(\lambda)=\{ \delta^*_{j+k} (\lambda) \} _{\substack{j=0\dots p\\k=0\dots p}}
\end{equation}

We compute the values of the parameters as follows:
\begin{enumerate}
\item Find $\lambda_1$, the solution of $\det \Delta_1\left(\lambda_1\right)=0$.
For $k=2\dots p$, find $\lambda_k$, the solution of $\det \Delta_k(\lambda_k)=0$ in the interval $[0,\lambda_{k-1})$. The solutions are unique and guaranteed to exist. Set $\lambda=\lambda_p$.

\item 
Compute $\Delta_p(\lambda)^{-1}$. This can be done by the eigenvalue decomposition method: $\Delta_p(\lambda)=ADA^T$ and $\Delta_p(\lambda)^{-1}=AD^{-1}A^T$. Since $D$ is a diagonal matrix, the diagonal elements of $D^{-1}$ are the inverses of the corresponding diagonal elements of $D$. Next, we solve the polynomial equation whose coefficients are the last row of $\Delta_p(\lambda)^{-1}$. The roots are the parameters $\mu_1, \mu_2, \dots \mu_p$.

\item Find the mixing coefficients $\pi_1,\pi_2,\dots,\pi_p$ as the solution to the system of equations
\begin{equation}
 \begin{pmatrix}
  1 & 1 & \cdots & 1 \\
  \mu_{1} & \mu_{2} & \cdots & \mu_{p} \\
  \vdots  & \vdots  & \ddots & \vdots  \\
  \mu_{1}^{p-1} & \mu_{2}^{p-1} & \cdots & \mu_{p}^{p-1}
 \end{pmatrix}
 \begin{pmatrix}
  \pi_1 \\
  \pi_1 \\
  \vdots  \\
  \pi_p 
 \end{pmatrix}=
 \begin{pmatrix}
  1 \\
  \delta_1^*(\lambda) \\
  \vdots  \\
  \delta_{p-1}^*(\lambda) \\
 \end{pmatrix}
 \end{equation}
\end{enumerate}

We compute the true moments of the sum distribution from the cumulants, using the standard formulas. We compute the cumulants as follows. Suppose that the values in the database are $v_1,\dots, v_n$ and the corresponding probabilities are $p_1,\dots, p_n$. For each tuple $i$ we define a Bernoulli random variable $A_i$ with probability of success $p_i$. The sum is then described by the random variable $A=\sum_{i=1}^nv_iA_i$. The $j$-th cumulant is $\kappa_j(A)=\sum_{i=1}^nv_i\kappa_j(A_i)$. The first cumulant of the Bernoulli random variables is $p_i$. The rest of the cumulants are computed recursively using the formula $\kappa_{j+1}=p_i(1-p_i)d\kappa_j/dp$.
 
To avoid numerical instability problems due to limited machine precision, it is crucial to transform the distribution such that the moments are computationally manageable. We apply the transformation $Z=(A-\mu + 10\sigma)/\sigma=(A-\kappa_1(A))/\sqrt{\kappa_2(A)}+10$. Deviating ten standard deviations from the mean of the original distribution allows us to fit a gamma mixture to the transformed distribution and also allows us to apply the method for sums that include negative values.

\section{From Probabilistic to Deterministic Plans}
\label{sec:queries}
In this section, we explain how we map an execution plan involving
probabilistic operators into an equivalent deterministic plan that carries out the computation. 

 In what follows, we first define the PGF from two perspectives: as a user defined aggregate (UDA) and as an abstract data type (ADT). Then we provide the translation from probabilistic to deterministic
operations.
We defer the details of various implementations of the PGF to Section~\ref{sec:imp}. \ignore{We use C++ like
code to specify the ADT.}

\subsection{The PGF as a user defined aggregate}

The result of aggregates in probabilistic databases is a probability distribution. We consider it to be a PGF. Mathematical operations on the various monoids presented in Section~\ref{sec:math} correspond to types of aggregates. The details of the computation of the result depend on the monoid/aggregate, but the structure of the computation is the same. This structure is best described as a \emph{user defined aggregate (UDA)}~\cite{sql3}). The PGF UDA has to support operations needed to compute
the aggregates in Section~\ref{sec:math}. The aggregation operations are implementations of the PGF ADT. Among those operations,the MIN/MAX aggregate accomodates an exact computation, as well as an approximation that omits computing the probabilities for very unlikely values. The SUM aggregate has an exact implementation, as well as two approximate implementations, normal and moment-based. 
The interface of the PGF UDA is defined by the standard UDA functions \texttt{Initialize, Accumulate, Merge, Finalize}. \ignore{shown in Figure~\ref{alg:pgf}.

\begin{figure}
\caption{The PGF UDA}
\label{alg:pgf}
{\scriptsize
\lstinputlisting[language=c++, frame=single]{PGFUDA.h}
}
\end{figure}
}

The system calls the \texttt{Initialize} method when the query starts executing and initializes an empty probability distribution. The DBMS scans the input tables and for each tuple it calls the \texttt{Accumulate} method, which incorporates the new tuple into the current distribution. It can create and run several PGFs for each query on different parts of the data, possibly in parallel, depending on the availability of processing units. The \texttt{Merge} method merges the results of two PGFs belonging to the same query. In order to minimize the computation cost and space usage, we do not keep the PGFs in expanded form until the end of the table scan. The system signals the actual computation of the distribution when it calls the method \texttt{Finalize}. The UDAs map one to one to aggregation monoids. The addition in each monoid is implemented by the \texttt{Accumulate} method of the UDAs.

\subsection{The PGF abstract data type}

In order to use the PGF as a value in our algebra, we need to be able to use it in comparison operations. We do this by providing an interface to the PGF that allows computing the probability that the aggregate represented by the PGF is greater than and equal to some other value, which could be a scalar or another PGF (Figure ~\ref{alg:pgf}). This interface defines the PGF abstract data type (ADT). Appropriate functions of the PGF ADT implement the comparison operations. In addition, the ADT provides a function for obtaining a confidence interval such that the probability of the aggregate being in that interval is a specified value.

\begin{figure}
\caption{The PGF ADT}
\label{alg:pgf}
{\scriptsize
\lstinputlisting[language=c++, frame=single]{PGFADT.h}
}
\end{figure}

\subsection{Probabilistic to Deterministic Operator Mapping}

We now show how we use the PGF abstract data type together with user defined aggregates (UDAs) to translate from probabilistic operators to
deterministic operators in extended relational algebra. For any
relation in the probabilistic query, say $R$, we designate by $R^p$ the
\emph{probabilistically enhanced} relation, i.e. the relation with the
extra attribute $p$, the tuple probability. In practice, $R^p$ is a deterministic relation obtained from an existing deterministic relation by adding the probability attribute. One way to do this is to create this attribute as a computed column, by applying some function to the existing attributes.\ignore{ We introduce an extended relational algebra to express probabilistic operators. In this algebra,} In our implementation of the PGF tables presented in Section~\ref{sec:math} there are two kinds of column types: single valued and probability distributions. Single valued columns are those that can only hold scalar values, in conformity to 1NF. The probability distributions are represented as PGF objects and are obtained by applying probabilistic aggregation operators to single valued columns. 

We provide the translation from probabilistic operators in the language $\mathcal Q$ to extended relational algebra operators in
Table~\ref{tbl:trans}. In this table $\pi$ is the extended projection operator that
can synthesize columns and performs no duplicate elimination, and $\varpi$
is the aggregation operator. The star denotes all attributes except the probability attribute for the probabilistically enhanced relations. We elaborate on each row of this table next.

\begin{table*}
\begin{tabular}{llll}
  Id & Probabilistic & Deterministic & Comments \\
  \hline \\
  I & $R$ & $R^p=\pi_{*,p=\mathcal{P}()}(R)$ & compute probability \\
  II & $\sigma^p_C(R)$ & $\sigma_C(R^p)$ & C: deterministic condition \\
  III & $\sigma^p_{A\theta B}(R)$ & $\pi_{*,p=p\times A.\theta(B)}(R^p) )$ & $A\theta B$: probabilistic cond.\\
  IV & $R \Join_C^p S$ & $\pi_{*,p=R.p\times S.p}( R^p \Join_C S^p )$ & $C$ is deterministic \\
  V & $\pi^p_{A_1,\dots, A_n}(R)$ & $\varpi_{A1,\dots,A_n; p=\mathcal{A}(p)}(R^p)$ & $\mathcal{A}$ is the \texttt{AtLeastOne} UDA \\
  VI & $\varpi^p_{A_1,\dots;\alpha_1=AGG_1(B_1),\dots}(R) $ & $\varpi_{A_1,\dots;p=\mathcal{A}(p), \alpha_1=PGF_1(p,B_1),\dots}(R^p)$ & $A_1,\dots,A_k,B_1,\dots,B_k$ deterministic\\
\end{tabular}
\caption{Conversion from probabilistic to deterministic query plans}\label{tbl:trans}
\end{table*}

{I)} This row does not describe a regular operator, but expresses how a conceptual probabilistic relation materializes as a deterministic relation by introducing an extra probability attribute for each tuple. 

{II)} For the probabilistic selection operator, we need to distinguish between two cases: the condition involves only single-valued columns (second row), or it involves aggregation columns (third row). In the first case we call the condition deterministic, and the database system simply carries out the selection on the probabilistically enhanced relation. Each tuple of the input relation will be in the result with the initial probability $p$. In the second case, the operands of the condition operator $\theta$ might each be a distribution, represented as a PGF UDA, obtained from a previous aggregation operation. In this case 
we assume that the probabilistic attributes involved in the condition are discarded after the selection (they are projected out). Then the selection can be written as 
 $\sigma_{A_i \theta A_j}(\mathcal R) = $
$\{(\tau.A_1,\dots,\tau.A_n,p) | $
$\tau\in \mathcal R(A_1,\dots,A_n),$
$p=\tau.p \times P(\tilde A_i=\tilde A_j)\}$.
In other words, the probability column will have the probability that the selection condition is true for the input tuple. To obtain this value, we call the operation corresponding to $\theta$ on the PGF ADT, which returns the probability that the condition is true. We multiply this probability with the initial probability of the tuple in order to obtain the final tuple probability. 

{IV)} The join operation joins the tuples in the probabilistically enhanced relation according to the join condition $C$ and adds the probability attribute, which is the product of the probabilities of the participant tuples. We assume that the condition for the join operator is deterministic, i.e. the columns that participate in the condition are single valued, not aggregates. This requirement is not a restriction, because we can rewrite a join with a probabilistic condition as a join without a probabilistic condition as $R \Join_C^p S = \sigma_C(R \Join_{true}^p S)$. However, we do not allow joins where both participating attributes are probabilistic. \ignore{We do not discuss the cross product operator because we can express it in a similar manner using the join operator.} 

{V)} The probabilistic projection operator simply applies a GROUP BY operation on the probabilistically enhanced relation. The probability of each tuple in the result is the probability that at least one of the tuples of each group is present in the input table. This is computed by a UDA called \texttt{AtLeastOne}, denoted by $\mathcal A$\ignore{, shown in Figure~\ref{alg:atleastone}}, which computes that at least one tuple in the probabilistic table is present in each group. \ignore{implements the sum in the probability ring over all tuple probabilities in each group. }Here we require that the columns being projected on are single-valued.

 VI) Finally, the aggregation operators use the PGF ADT to produce the distribution, which we store in the output table as a tuple value. Depending on the type of aggregation, we create an appropriate PGF object for each group. Notice that the PGF ADT is at the same time a UDA. For each tuple of the group, the database system calls the \texttt{Accumulate} method, which incorporates the new value into the current probability distribution. \ignore{The system decides if an exact implementation or an approximation is more appropriate for the situation and might even switch from exact computation to approximation depending on current workload.} The system stores the final PGF as a value in the aggregated column and computes the tuple probability similar to the projection operator.

\section{Implementation details}
\label{sec:imp}

In this section we delve more deeply into the implementation details of our method. We implemented our system in C++, as an extension to Glade~\cite{glade}. We explore each aggregate operator in a separate subsection.

\subsection{Comparison operators}
The comparison operators are described by the PGF ADT interface presented in Section~\ref{sec:queries} and are implemented following the math presented in Section~\ref{sec:math}. For computing the probability that the random variable represented by current PGF is equal to a constant value, we just report the probability mass at that value. For computing the probability that the random variable represented by the current PGF is equal to a random variable represented by another PGF, we iterate through the values of the current PGF and accumulate the product of the probabilities of both PGFs at the current value, assuming independence of the two variables. For computing the probability that the random variable represented by the current PGF is less than a constant value, we iterate through the values of the current PGF until that constant value and accumulate the probabilities. For computing the probability that the random variable represented by the current PGF is equal to a random variable represented by another PGF, we iterate through the values of the other PGF and accumulate the probability that the current PGF is less than the current value.

\subsection{COUNT}

For the COUNT aggregate we provide two implementations: exact and approximate. For the exact implementation, we need to carry out the polynomial multiplication $\prod_{i=1}^N ( q_i+p_i z)$. Our method is to multiply all the one-degree polynomials two by two, then the 2-degree results two by two and so on, until we get a single polynomial as the final result. We use the FFT method~\cite{fft:ct,fft:gauss} for fast polynomial multiplication, which has an $O(n \log n)$ time complexity, for an overall complexity for the full distribution of $O(n \log^2 n)$. We use the FFTW implementation~\cite{fftw} of FFT. The classical polynomial multiplication algorithm has a time complexity of $O(n^2)$. For small degree polynomials, i.e., polynomials of degree less than 5000, the overhead of Fourier transform operations is more than the trivial cost of the $O(n^2)$ algorithm. We use the classical $O(n^2)$ method for polynomials of degree smaller than this threshold and the $O(n \log ^2 n)$ algorithm for larger polynomials.

For the approximate implementation we use the algorithm for the SUM approximation, described in Section~\ref{sec:math}, with the values for all tuples set to 1.

\subsection{MIN/MAX}
The minimum of an attribute $A$ is equal to a given value $M$ if all the tuples that have in attribute $A$ values smaller than $M$ are absent and at least one tuple that has value $M$ in attribute $A$ is present. Following from the tuple independence property, the probability of this event can be factorized as presented in Section~\ref{sec:min}. The PGF ADT implementation that computes this formula uses the \texttt{AtLeastOne} UDA. We maintain an ordered list of (\texttt{value, AtLeastOne}) pairs. The \texttt{AtLeastOne} object associated to a value computes the probability that at least one tuple with that value is present. By subtracting this probability from unity we obtain the probability that no tuple is present. When a new tuple is examined in \texttt{Accumulate}, we check if the value has been seen before. If it has not, a new \texttt{AtLeastOne} UDA is created and a new pair is inserted in the list. If it has, the entry in the list corresponding to the value is identified. In both cases, the corresponding \texttt{AtLeastOne} object in the ordered list is updated by calling its \texttt{Accumulate} method. The \texttt{Merge} method goes through the entries of the other list and combines them with the list of this object, calling the \texttt{AtLeastOne.Merge} method for matching values and creating new entries for new values. To compute the probability that the minimum is equal to some value denoted by $a$, the \texttt{Equal} method iterates through the ordered list and multiplies together the results of the \texttt{AtLeastOne} objects while the list values are smaller than $a$. Finally, the result corresponding to $a$ is multiplied into the product. \ignore{We present in Figure~\ref{alg:min} part of the code for computing the entire distribution for the aggregate. }To compute the approximation that neglects the unlikely large values, some extra steps are taken to limit the number of values that are maintained by the PGF UDA to some predetermined capacity: whenever a new value is added to the list, we check if we exceeded the capacity, and if so, we eliminate the largest value. 

\subsection{SUM}

The SUM aggregate implementation is similar to MIN. We maintain an ordered list of (value, count-PGF) pairs. The PGF associated to its value is the distribution for the count of the tuples with that value. When a new tuple is examined in \texttt{Accumulate}, we search the list and create a new entry for the value if one doesn't exist yet. The corresponding PGF in the ordered list is updated by calling its \texttt{Accumulate} method. When all the tuples have been seen, the final sum PGF is computed using the count PGFs, taking advantage of the fast polynomial multiplication algorithm as seen in Section~\ref{sec:math}. After the partial count PGFs are computed, each is ``evaluated'' at $z^{\alpha_i}$ by spreading the coefficients $\alpha_i$ positions apart, inserting zeros in the empty spaces. For example, for list item $(3, 0.2z^2+0.3z+0.5)$ we create the polynomial $0.2z^6+0z^5+0z^4+0.3z^3+0z^2+0z+0.5$. In the end, we use FFTW again to multiply the extended polynomials and thus obtain the final result. \ignore{The algorithm is presented in part in Figure~\ref{alg:sum}.}

For the moment-based approximation the implementation follows the strategy outlined in Section~\ref{sec:math}. In the scanning phase the method \texttt{Accumulate} just accumulates the cumulants of the true distribution. The rest of the work is done by the \texttt{Finalize} method. Before using the moments, in order to avoid numerical instability problems, it is crucial to scale the moments as described in Section \ref{sec:practice}. The moment based approximation provides a very accurate estimate of the cdf of the distribution. As with any approximation, the pdf is more problematic. Some values in the covered range might not be possible at all, but the approximation will still have a non-zero value. \ignore{In Figure~\ref{alg:kolmogorov} we present the pseudocode for our implementation. We only show the \texttt{Greater} method, which computes the cdf using the GSL library. The \texttt{EqualTo} method returns the probability mass in an interval of length $1$ around the argument.} To compute the probability that the sum is equal to a specific value we return the probability mass in an interval of length $1$ around the value. We use GSL~\cite{gsl} for numerical solutions involved in our method.

\section{Experimental evaluation}
\label{sec:exp}

As stated in the introduction, the main goal of this paper is to show
that it is possible to run probabilistic queries efficiently at
terabyte database size. In this section we provide convincing evidence that the
goal has been achieved. The specific goals of the empirical evaluation
are:
\begin{compactitem}
\item show that ideas in this paper can be incorporated into a high-performance database system
\item asses the performance penalty probabilistic queries incur for both exact and approximate methods
\item asses the accuracy of the approximation methods when compared to the exact methods
\end{compactitem}
As we will show, when the approximate methods described in Section~\ref{sec:practice} for aggregates are used, the performance is on par with the deterministic queries. At the same time, the approximation accuracy is truly exceptional: $10^{-6}$ relative error or less. Given the large scale of our experiments -- 1TB scale -- 
these results support our claim that probabilistic queries can be tackled for large data.

\subsection{Experimental Setup} 

\paragraph*{Dataset:} All experiments we perform are based on the TPC-H benchmark\cite{tpch} at the scale factor 1000 -- this corresponds to a total storage of 1TB. The largest relation is \texttt{lineitem} and has 6 billion tuples at this scale factor, next largest, \texttt{orders}, has 1.5 billion tuples. For each of the relations, an extra column, \texttt{p}, with a randomly selected number between 0.0 and 1.0 is added and encodes the probability that the corresponding tuple exists. All probabilistic queries involve this column for all relations involved. 

It is worth noting that the experiments in this paper involve data much larger than any published results on probabilistic queries: 1TB vs the 1GB used by non-aggregate queries by \cite{olteanu2009ICDE,fink,wick}, the previous largest dataset.

\paragraph*{Queries:} The queries used are variations of 17 out of 22 TPC-H queries --  all queries that do not contain \texttt{EXCEPT} clauses are included. For each TPCH query we executed four versions: 
\begin{compactitem}
\item the computation of the confidence of obtaining at least one tuple in the result, which corresponds to the boolean versions of the queries in~\cite{olteanu2009ICDE} (labeled ``confidence''); 
\item the confidence computation for each group in the result set, which corresponds to the non-boolean version of the queries in~\cite{olteanu2009ICDE} (labeled ``group confidence''); 
\item the probabilistic TPCH query with aggregates (labeled ``aggregates'') and
\item the original, deterministic TPCH query (labeled ``deterministic''). 
\end{compactitem}
While due to lack of space we cannot list the specific probabilistic versions of the TPC-H queries that we used, we exemplify the more complex version, \emph{aggregate}, by showing the probabilistic version of Q20 in Figure~\ref{fig:query}.

\begin{figure}
\begin{align*}
R_1 &= \sigma_\text{p\_name LIKE ``forest\%''}(part)\\
R_2 &= \text{partsupp} \Join_\text{p\_partkey=ps\_partkey} R_1\\
R_3 &= \sigma_\text{l\_shipdate BETWEEN (1995/01/01, 1996/01/01)}(\text{lineitem})\\
R_4 &= \text{partsupp} \Join_\text{l\_partkey=ps\_partkey AND l\_suppkey=ps\_suppkey} R_3\\
R_5 &= R_1 \Join_\text{l\_suppkey=ps\_suppkey} R_4\\
R_6 &= \varpi_{\text{ps\_suppkey},\alpha=\text{SUM(l\_quantity)}}(R_5)\\
R_7 &= \pi_{*,p = 1 - \alpha.\text{Greater(ps\_availqty)}}(R_6)\\
R_8 &= \sigma_\text{n\_name = ``CANADA''}(\text{nation})\\
R_9 &= \text{supplier} \Join_\text{s\_nationkey=n\_nationkey} R_8\\
R_{10} &= R_7 \Join_\text{s\_suppkey=ps\_suppkey} R_9\\
Q_{20} &= \pi_{\text{s\_name,s\_address}} R_{10}
\end{align*}
\caption{Probabilistic version of TPCH query 20}
\label{fig:query}
\end{figure}

\paragraph*{Computer system and implementation:} For all the
experiments in this paper we use a mid-range server with 4 AMD Opteron
6168 processors, each with 12 cores (48 cores total) running at
1.9GHz, 256GB main memory and 76 hard drives attached through 3 LSI
host addapters. The system is capable of streaming data from the disk
array at rates of up to 3GB/s. A system with similar capabilities can
be purchased at the time of writing this paper for less than 20,000\$.

The probabilistic monoids both for the exact and probabilistic version
are encoded as combined ADT and UDFs as explained in Section~\ref{sec:queries}
in Glade/DataPath\cite{glade,datapath}, a high-performance relational
database system that has extensive support for advanced user defined
types (ADT) and user defined aggregates. The actual encoding is in
C++; queries are expressed in Piggy, a Pig Latin like dataflow
language supported by Glade.

\subsection{Large Scale Experiments}

For our first set of experiments, we execute 17 out of 22 TPC-H
queries at 1TB scale using the 4 query variants: deterministic,
confidence, group confidence and aggregate. The aggregate queries use
the moment based approximation in Section\ref{sec:practice}.  For comparison
purposes, we include in the result published performance on these
queries for the 10th best TPC-H result -- a 2M\$ system running Oracle
11g on 64 core SPARC64 VII+, with 512GB memory, 80 24GB flash
hard drives. Results are depicted in Figure~\ref{fig:ICDE09}. A number
of interesting conclusions can be drawn:
\begin{inparaenum}
\item the execution of aggregate probabilistic queries is not
  significantly slower than deterministic queries
\item all queries run faster than 2 minutes at 1TB scale
\item on some of the harder queries like Q1, Q18 and Q21, the
  probabilistic aggregate query is faster than Oracle on the deterministic query.
\end{inparaenum}
As a secondary comparison, since a direct comparison at 1TB scale is
not possible with the current MayBMS implementation (which is based on
Postgres, which does not scale to such large scale), we present in
Figure~\ref{fig:ICDE09comp} the results reported in
\cite{olteanu2009ICDE} for 1GB scale TPC-H. Notice that, even under
ideal scaling, the group confidence versions of Q3 and Q18 would take
10,000s, at least 100 times longer than our solution. Only
non-aggregate queries were included in Figure~\ref{fig:ICDE09comp}
since MayBMS does not allow aggregate queries.

As it is clearly evident from these experimental results,
probabilistic queries on terabyte databases are feasible with the
techniques introduced in this paper, as long as the approximate methods
for aggregate computation are used. To address concerns about the
accuracy and the need for this approximations, we further study them
in the remainder of this section.

\subsection{Approximate vs. Exact Aggregates}

In its most general form, the SUM aggregates require exponential
effort (the problem is noted to be NP hard in \cite{fink}). This means
that even tables with 100 tuples are out of reach of minute long
computation. For this reason, we focus on the simplest version of
aggregate queries, COUNT, and study the performance based on a
COUNT(*) aggregate over a filtered version of \texttt{lineitem}. The
filtering predicate will allow us to select between 100M and 1 billion
tuples for the COUNT aggregation. The running time experiments and
comparison with deterministic and approximate count based on moments
are depicted in Figure~\ref{fig:count}. We also included in the
comparison the running time reported in \cite{fink} for this query at
scale factor 0.1 (600,000 tuples in \texttt{lineitem}). These
experiments reveal that, as seen before, the approximate method is
within a small factor of the deterministic method and between 7.6-37
times faster than the exact computation. Also notice that our exact
computation is much faster than the one in \cite{fink} since problems
100 times larger can be accommodated in the same time -- this is mostly
due to the use of parallelism and FFTW.

The above experiments revealed that the approximate aggregation is much
faster than the exact computation for COUNT. This gap widens dramatically
for SUM. While slower, exact computation allows us to asses the precision of
the approximate aggregation. The result for the same experiment as
above is depicted in Figure~\ref{fig:error}, where the relative error
of the computation of the lower end of the .95 confidence interval is
depicted. These results reveal that the error is really small (between
$3\cdot 10^{-7}$ and $2\cdot 10^{-9}$) thus there is no practical
reason to prefer the exact computation.

\begin{figure*}[t]
\centering
\includegraphics[scale=1]{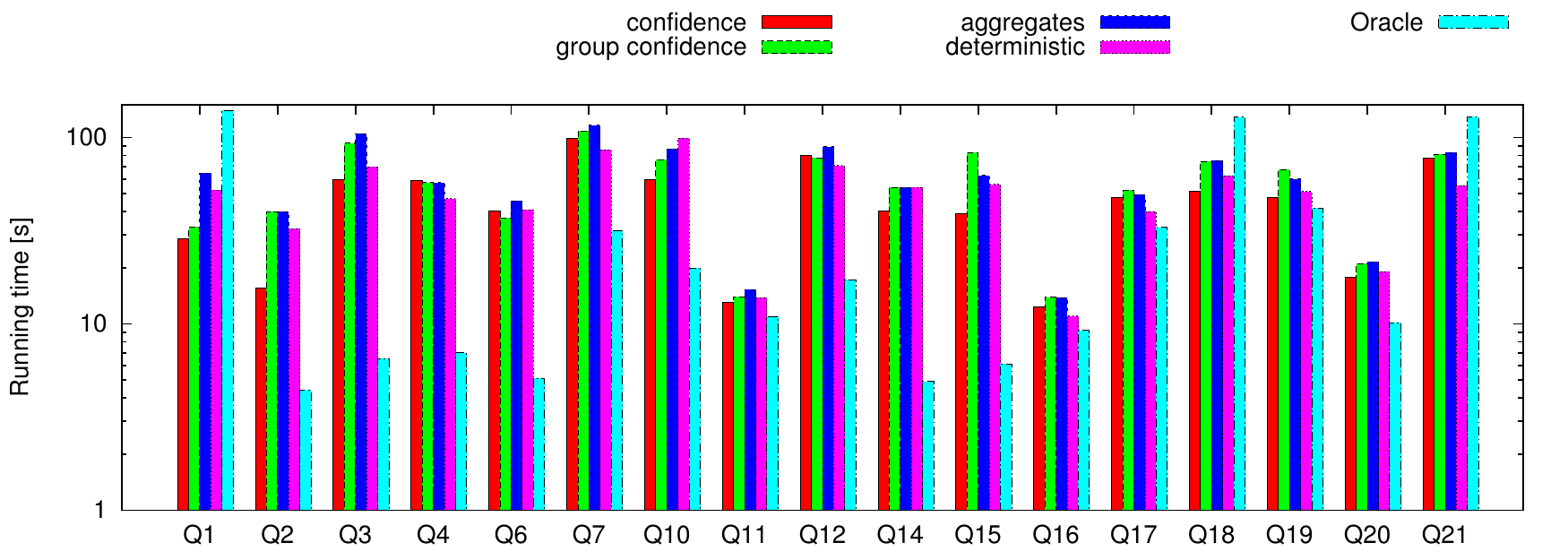}
\caption {Running time for our system on TPCH queries, 1TB database size} 
\label{fig:ICDE09}
\end{figure*}

\begin{figure*}[t]
\centering
\includegraphics[scale=1]{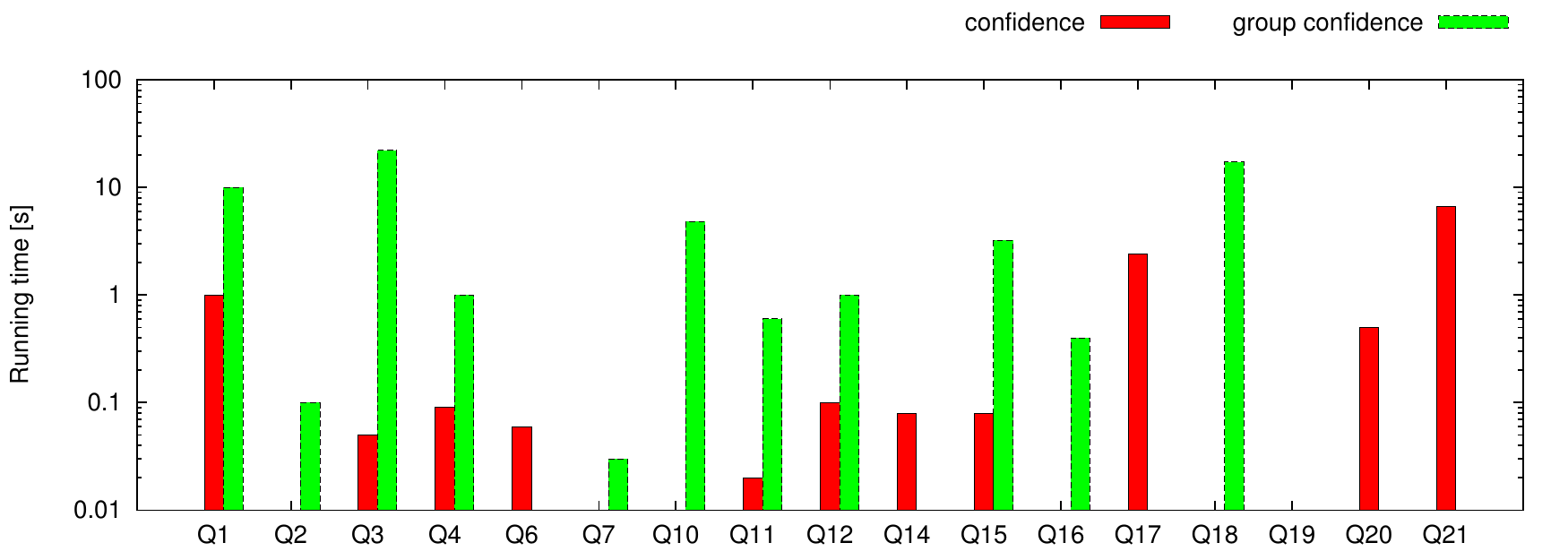}
\caption {Running time for SPROUT [26] on TPCH queries, 1GB database size} 
\label{fig:ICDE09comp}
\end{figure*}

\begin{figure}[t]
\centering
\includegraphics[scale=.7]{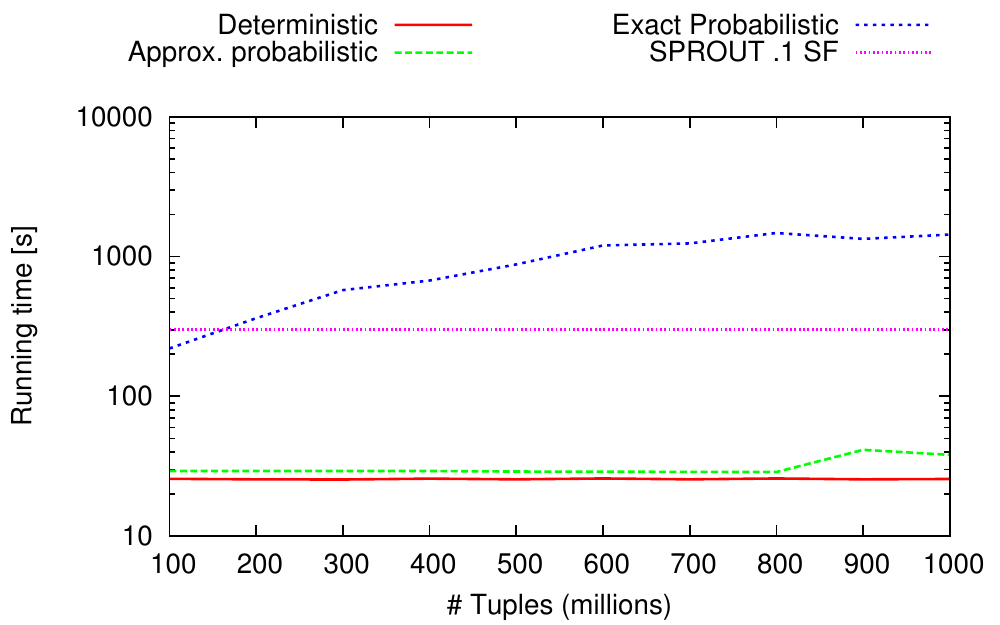}
\caption {Running time of the approximate and exact method for the COUNT aggregate.} 
\label{fig:count}
\end{figure}

\begin{figure}[t]
\centering
\includegraphics[scale=0.7]{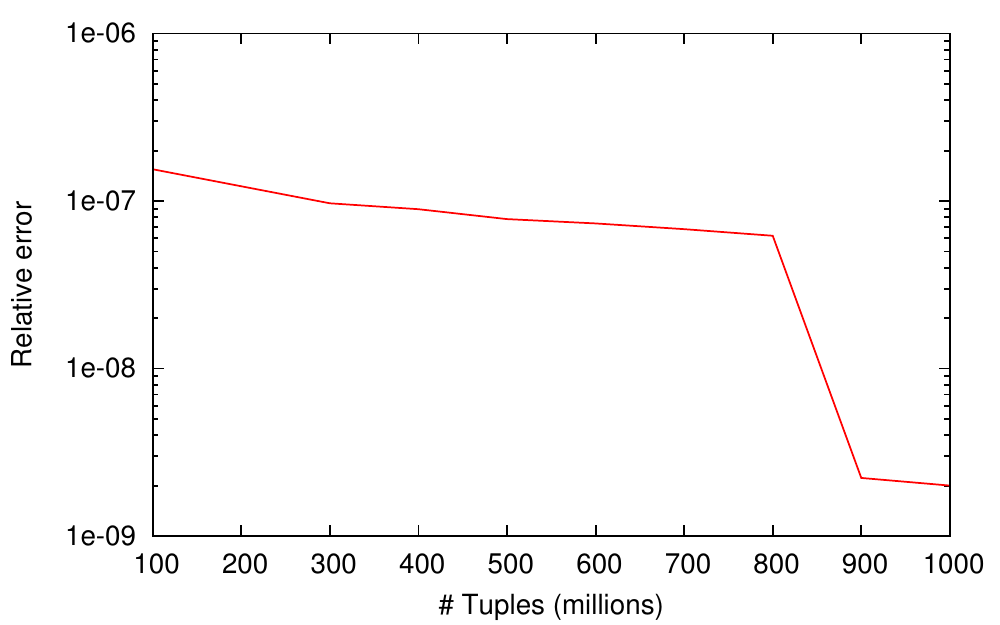}
\caption {Relative error of the approximate method for the COUNT aggregate.} 
\label{fig:error}
\end{figure}

\section{Discussion}

The theoretical and practial approach in \cite{fink} is to express the probabilistic computation as a semi-module expression/d-tree. This ensures very elegant theoretical treatment. Our theory looks very similar but there are important differences. First, by encoding the distributions as PGFs/polynomials in a fixed space we maintain uniformity of the computation that is exploited by the implementation. Second, the interpretation of the polynomial leads in a natural way to possible approximations. Third, a connection with FFT is naturally established. As our experiments revealed, the subtle theoretical differences result in large differences in performance.

\section{Conclusion}
In this paper we presented a comprehensive probabilistic database management system that can answer complex queries involving aggregates in a massive databases of 1TB size, in about 100 seconds. This represents an important performance boost compared to existing systems. Our solution is based on a uniform undelying theory based on probability generating functions, combined with performance-driven implementations and approximate solutions for extremely large input data. The system also draws its power from the architecture-centered infrastructure provided by Glade. In the future we plan to expand the system with an AVERAGE aggregate and to explore its possible applications in the area of fuzzy queries.


\bibliographystyle{abbrv}
\bibliography{pdb_blind.bib}  



\end{document}